\newtheorem{definition}{Definition}
\newtheorem{theorem}{Theorem}
\newtheorem{lemma}{Lemma}
\newtheorem{corollary}{Corollary}
\newtheorem{mechanism}{Mechanism}
\title{Collusion-proof And Sybil-proof Reward Mechanisms For Query Incentive Networks}
\author {
    Pingzhong Tang, Youjia Zhang
}
\begin{document}

\maketitle

\begin{abstract}
     This paper explores reward mechanisms for a query incentive network in which agents seek information from social networks. In a query tree issued by the task owner, each agent is rewarded by the owner for contributing to the solution, for instance, solving the task or inviting others to solve it. The reward mechanism determines the reward for each agent and motivates all agents to propagate and report their information truthfully. In particular, the reward cannot exceed the budget set by the task owner. However, our impossibility results demonstrate that a reward mechanism cannot simultaneously achieve Sybil-proof (agents benefit from manipulating multiple fake identities), collusion-proof (multiple agents pretend as a single agent to improve the reward), and other essential properties. In order to address these issues, we propose two novel reward mechanisms. The first mechanism achieves Sybil-proof and collusion-proof, respectively; the second mechanism sacrifices Sybil-proof to achieve the approximate versions of Sybil-proof and collusion-proof. Additionally, we show experimentally that our second reward mechanism outperforms the existing ones.
\end{abstract}

\section{Introduction}
\label{sec: introduction}

There is an old proverb that states, "Many hands make light work." In other words, the more people involved in a task, the quicker and easier it will be completed. With the growing popularity of social media, many online platforms (e.g., Quora and Stack Overflow) provide the opportunity for people to ask a question online. As opposed to traditional search engines (e.g., Google and Bing), Q\&A platforms allow users to post questions and have them answered by other users rather than a central system that provides several related solutions. A major advantage of online Q\&A platforms is that the task is handled by humans, who are capable of solving more rare questions. Additionally, it is unnecessary for the questioner to simplify the question in order to make it understandable by a machine. 

An important open question in such query models is developing a reward mechanism that ensures the query is propagated successfully and the solution is provided if one exists. Many researchers have discussed and applied such a reward mechanism in various fields, such as peer-to-peer file-sharing systems \cite{golle2001incentives}, blockchains \cite{brunjes2020reward}, and marketing \cite{drucker2012simpler} and task collaboration \cite{P-295}.

In this paper, we mainly focus on reward mechanisms for the answer querying system, in which only one agent is chosen to solve the problem. When implementing such reward mechanisms, it is critical to understand how to motivate agents to spread the information successfully and answer the question honestly. In addition, the total reward should not exceed the budget of the task owner (questioner). Apart from these basic requirements, there are two other significant challenges to be overcome.

The first challenge is Sybil-proof, avoiding agents manipulating multiple false identities to improve the overall reward. For example, an agent who knows the solution can create multiple accounts, and his other accounts can invite him to do so. As the task owner does not know the information of each participant, all these accounts are rewarded. Moreover, Sybil attacks not only increase the monetary costs of the questioner but also delay the time it takes to get a solution. This issue has been discussed in various literature, including \cite{babaioff2012bitcoin, drucker2012simpler, chen2013sybil, lv2016fair, zhang2021sybil}.

The second challenge is collusion-proof, preventing agents from colluding with each other to gain more rewards. For instance, an agent who knows the solution could report it to his parents rather than solve it directly, as such action may increase the overall reward. Additionally, his parents can also take the same action. In consequence, the questioner spends more time solving the problem. A question of this nature is not well studied in the query incentive network \cite{nath2012mechanism, zhang2021sybil}, but has been extensively discussed in other fields, for example, auction design \cite{laffont1997collusion, che2006robustly, marshall2007bidder}.

However, it is impossible for a reward mechanism to simultaneously achieve both Sybil-proof (SP) and collusion-proof (CP) along with a set of other desirable properties (as illustrated in Section \ref{sec: impossibility}). Hence, we not only formally define these properties but also provide approximate versions of SP and CP. In this paper, we propose two novel families of reward mechanisms that are implementable in dominant strategies. The first mechanism is inspired by a geometric mechanism \cite{emek2011mechanisms}. In addition to the basic properties, it also achieves Sybil-proof and collusion-proof, respectively. The second mechanism sacrifices Sybil-proof in order to achieve approximate versions of Sybil-proof and collusion-proof simultaneously. Furthermore, both mechanisms provide a fair allocation/outcome, which is not well-defined in the query incentive network \cite{rahwan2014towards}. Finally, our numerical experiments indicate that our second reward mechanism performs better than the existing ones.

The remainder of the paper is organized as follows. Section \ref{sec: literature review} reviews the relevant literature. Section \ref{sec: preliminaries and model} formally defines the model and properties. In Section \ref{sec: impossibility}, we prove several impossibility results. Thereafter, we present and discuss two families of reward mechanisms in Sections \ref{sec: mechanism 1} \& \ref{sec: mechanism 2}, respectively. In Section \ref{sec: empirical}, we numerically compare our mechanisms with other reward mechanisms. Lastly, we provide concluding remarks and discuss potential future research in Section \ref{sec: conclusion}.

\section{Literature Review}
\label{sec: literature review}

Our research contributes to three streams of literature: query incentive networks, Sybil-proof, and collusion-proof. Below is a brief overview of the research areas closest to our own. 

The seminal paper \cite{kleinberg2005query} was the first to introduce the query incentive model, which treated the query network as a simple branching process. They found that the reward function exhibited threshold behavior based on the branching parameter $b$. Apart from the fixed-payment contracts, \cite{cebrian2012finding} studied the winning team of the Red Balloon Challenge \cite{pickard2011time} and proposed a split contract-based mechanism, which is robust to agents’ selfishness. 

It should be noted that the aforementioned works have not been examined in the context of Sybil attacks. The idea of Sybil attacks was first introduced by \cite{douceur2002sybil}. \cite{douceur2007lottery} formalized a set of desirable properties and proposed the Lottery Tree mechanism, which is Sybil-proof and motivates agents to join in a P2P system. \cite{emek2011mechanisms, drucker2012simpler} studied Sybil-proof mechanisms for multi-level marketing in which agents buy a product and are rewarded for successful referrals. \cite{babaioff2012bitcoin} studied a similar problem and proposed a reward scheme for the Bitcoin system. \cite{lv2015incentive, lv2016fair} introduced a reward mechanism in crowdsourcing or human tasking
systems. \cite{chen2013sybil} proposed a Sybil-proof Mechanism for query incentive networks with the Nash equilibrium implementation, while \cite{zhang2021sybil} considered the dominant strategy.

As we mentioned before, collusion-proof is not well-studied in query incentive networks. \cite{chen2013sybil} left it as future work, and \cite{nath2012mechanism, zhang2021sybil} discussed the impossibility of a reward mechanism to achieve Sybil-proof and collusion-proof simultaneously. Specifically, \cite{nath2012mechanism} proposed a collusion-proof reward mechanism. Many other works such as \cite{laffont1997collusion, che2006robustly, marshall2007bidder} belong to this category. According to related studies on auction design, this research area is promising and deserves further investigation.

\section{Preliminaries and Model}
\label{sec: preliminaries and model}

Consider the problem for a task owner (she) $r$ employing a set of agents to solve the task (e.g., Red-balloon type of challenges and InnoCentive). Each agent (he) can solve the job by himself or ask his friends for assistance. Intuitively, the query process works as follows:
\begin{enumerate}
    \item The task owner $r$ announces the task and the reward scheme and propagates the information to her friends.
    \item Upon receiving the query, her friends decide whether to do the task if they can and whether to continue propagating the query to their friends for assistance.
    \item Once the task is solved, the agents are rewarded according to the rules specified in Step (1).
\end{enumerate}

Following the query process, a query tree rooted at $r$ is built; we denote such a tree as $T_{r}=(V, E)$, where $V$ is the set of all agents, including the task owner $r$, and edge $e(i, j) \in E$ means that agent $i$ informs the task to agent $j$. We also use the standard tree notions of \textit{parent} and \textit{child} in their natural sense; the parents and children of agent $i$ are denoted as $p_{i}$ and $c_{i}$, respectively. Note that an agent $i$ is reachable from the root $r$ if all his ancestors are in the query tree $T_{r}$ and decide to propagate the query. Furthermore, we assume that both owner and agents have zero cost to inform others in our main model for analytical brevity.

In such a process, agents are asked to report two pieces of information, the response to the task $resp_{i} \in \{0, 1\}$ ($1$ for answering, otherwise, $0$) and the set of children $c_{i} \subseteq V \setminus r$. Accordingly, an agent $i$'s action in the mechanism is defined as $\theta_{i}^{\prime}=(resp^{\prime}_{i}, c^{\prime}_{i})$. Notice that agents cannot propagate the task to the non-existing child; therefore, $c^{\prime}_{i} \subseteq c_{i}$.

\begin{definition}
    Given a report file $\theta^{\prime}$ of all agents, let the tree generated from $\theta^{\prime}$ be $T_{r}(\theta^{\prime})=(V^{\prime}, E^{\prime}) \subseteq T_{r}$, where $V^{\prime} \in V$ and $E^{\prime} \subseteq E$.
\end{definition}

Given the above setting, the task owner aims to design a reward mechanism that determines how the task is solved and how the rewards are distributed among the rooted tree $T_{r}(\theta^{\prime})$. (Note that agents cannot invite those who are already in the query. Hence, it is a rooted tree rather than a graph.) Furthermore, the mechanism only rewards agents who have made a positive contribution. Mathematically, the formal definition of such a mechanism is defined as follows.

\begin{definition}
    The Reward Mechanism $M$ on the social network is defined by a task allocation path $f: T_{r} \to T_{r}$ ($T_{r}$ is the structure of the tree rooted at $r$), and a reward function $x=(x_{i})_{i \in f(T_{r}(\theta^{\prime}))}$, where $x_{i}: \Theta \to \mathbb{R_{+}}$ and $\theta_{i} \in \Theta$ is the type profile.
\end{definition}

In this paper, we assume the task owner $r$ chooses only one agent to do the task. Such an assumption has no significant bearing on our results but simplifies analysis and exposition. In addition, when there exist multiple agents who can solve the task, the agent is selected via the shortest path to the root $r$. Since creating fake accounts increases the distance from the root to the solution, to some extent, such a task allocation rule can limit Sybil attacks \cite{chen2013sybil}.

\begin{definition}
    Given the agents profile type $\theta$, for each $resp(i)=1$, we define the shortest path from the root $r$ to agent $i$ as $P_{i}=\{r, a_{1}, a_{2}, ..., i\}$. The task is allocated to the agent with $P=\min _{i} P_{i}$, for all $resp(i)=1$. If there exist multiple agents with $P$, then they are selected randomly.
\end{definition}

\subsection{Properties}
\label{subsec: property}

In this section, we define a set of important properties that a reward mechanism $M$ on the social network should satisfy. All these properties are similar and inspired by the Lottery Tree \cite{douceur2007lottery}, the multi-level marketing \cite{emek2011mechanisms}, and query networks \cite{zhang2021sybil}; some of them are generalized to our query model.

Formally, we provide the definitions for a mechanism to be incentive compatible and individually rational. Since agents are rewarded based on their contributions, agents out of the path $P=f(T_{r}(\theta^{\prime}))$ are not rewarded. We mainly focus our study on the agents in the path $P$ and introduce more strict definitions of IR and IC. 

\begin{definition}
\label{def: po & ic}
    The Reward Mechanism $M$ is 
    \begin{itemize}
        \item Profitable Opportunity (PO), if $x(\theta_{i}) > 0$,
        
        \item Incentive Compatible (IC), if $x(\theta_{i}) \geq x(\theta_{i}^{\prime})$,
    \end{itemize}
    for all agents $i \in P\setminus r$ and $\theta_{i}, \theta_{i}^{\prime} \in \Theta$.
\end{definition}

PO (a.k.a. strongly IR) ensures all agents in the path $P$ are rewarded by the owner $r$ for participating in the mechanism. IC promises all these agents to do the task truthfully and propagate it to all children. Meanwhile, the total reward should be bounded; the task owner never wants to reward more than she has.

\begin{definition}
    The Reward Mechanism $M$ is budget balanced (BB) if there exists a constant $\Pi$ such that 
    $$\sum_{i \in P\setminus r}x(\theta_{i}) \leq \Pi,$$
    and strongly BB if $\sum_{i \in P}x(\theta_{i}) = \Pi.$
\end{definition}

A mechanism is strongly BB if the total reward of all agents $i\in P$ is exactly the budget $\Pi$. 

Next, we consider the reward for each agent. For convenience, hereafter, we define the reward to agent $i$ in the path $P$ with length $n$ as $x(i, n)$, where the agent with depth $n$ in $P$ provides the solution to the task, and $1\leq i \leq n$. 

As discussed in the relevant literature, the reward of an agent $i$ should depend on his child $c_{i}$ and the corresponding depth $i$ in the path $P$ to credit the indirect referrals. For the direct referral, agent $i$ should receive at least a certain fraction of his child's reward $x(i+1, n)$. Violation of these properties leads to a failure in propagating the query under certain conditions. Formally, these properties are defined as follows.

\begin{definition}
    The Reward Mechanism $M$ is $\rho-$split, if $x(i, n) \geq \rho x(i+1, n)$, for any $i \in P\setminus r$ and $0< \rho < 1$.
\end{definition}

As we mentioned before, agents are rewarded for their contribution to completing a task. Hence, an agent may benefit from a false-name attack (or Sybil). Specifically, an agent who has the answer to the question creates a fake account and completes the task through it. Consequently, he receives a greater reward for doing so than for reporting truthfully.

\begin{definition}
\label{def: sp}
    The Reward Mechanism $M$ is $\lambda$-Sybil proof ($\lambda$-SP) ($\lambda \in \mathbb{N}^+$) if
    $$x(i, n)\geq \sum_{k=0}^{\lambda}x(i+k, n+\lambda),$$
    for any $i \in P$ and $\lambda \geq 1$.
    
    If $\lambda$-Sybil proof holds for all positive integers $\lambda$, such a mechanism is \textbf{Sybil-proof}.
\end{definition}

A Sybil-proof (SP) mechanism ensures that any agent in the path $P$ cannot benefit from pretending to be multiple agents. Moreover, a mechanism is $\lambda$-SP if it can prevent agents from gaining more rewards by creating $\lambda$ Sybil attacks (manipulating $\lambda$ more identities).

On the other hand, multiple agents can collude together to pretend as a single agent in order to get more rewards from the owner $r$. \cite{nath2012mechanism, zhang2021sybil} discuss such an observation in detail. The formal definition of collusion proof is given as follows.

\begin{definition}
\label{def: cp}
    The Reward Mechanism $M$ is $(\gamma+1)$-collusion proof ($(\gamma+1)$-CP) ($\gamma \in \mathbb{N}^+$) if
    $$x(i, n)\leq \sum_{k=0}^{\gamma}x(i+k, n+\gamma),$$
    for any $i\in P$ and $\gamma \geq 1$.
    
    If $\gamma$-collusion proof holds for all positive integers $\gamma$, such a mechanism is \textbf{collusion-proof}.
\end{definition}

A collusion-proof (CP) mechanism promises that all agents are worse off from forming a coalition group of any size. Similarly, a mechanism is $\gamma$-CP if agents get more reward from creating a group with a size no less than $\gamma$.

\section{Impossibility Theorem}
\label{sec: impossibility}

So far, we have defined the set of desirable properties that a reward mechanism should satisfy. Our new mechanisms are developed in response to the following impossibility results. One of the impossibility results suggests that a mechanism cannot satisfy profitable opportunity (PO), Sybil-proof (SP), and Collusion-proof (CP) simultaneously.

\begin{theorem}
\label{thm: impossibility 1}
    For $n\geq 3$, there is no Reward Mechanism that can achieve PO, SP, and CP simultaneously. 
\end{theorem}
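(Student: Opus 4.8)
The plan is to show that requiring SP and CP with a \emph{common} parameter collapses both inequalities into a single equation, and then that this equation cannot coexist with PO once the solution sits at depth at least three. First I would set $\lambda = \gamma$ in Definitions \ref{def: sp} and \ref{def: cp}. For every $i \in P$ and every integer $\lambda \geq 1$, SP gives $x(i,n) \geq \sum_{k=0}^{\lambda} x(i+k, n+\lambda)$ while CP gives the reverse inequality $x(i,n) \leq \sum_{k=0}^{\lambda} x(i+k, n+\lambda)$. Hence any mechanism that is simultaneously SP and CP must satisfy the exact recurrence
$$x(i,n) = \sum_{k=0}^{\lambda} x(i+k, n+\lambda), \qquad \text{for all valid } i, n \text{ and all } \lambda \geq 1. \quad (\star)$$

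Second, I would extract two instances of $(\star)$ and play them against each other. Taking $\lambda = 1$ yields the one-step relation $x(i,n) = x(i, n+1) + x(i+1, n+1)$. Applying this relation once more to each summand $x(i, n+1)$ and $x(i+1, n+1)$ and collecting terms produces
$$x(i,n) = x(i, n+2) + 2\,x(i+1, n+2) + x(i+2, n+2),$$
so the coefficients grow like Pascal's triangle. On the other hand, taking $\lambda = 2$ directly in $(\star)$ gives $x(i,n) = x(i, n+2) + x(i+1, n+2) + x(i+2, n+2)$, where the middle coefficient is $1$. Subtracting these two expressions forces $x(i+1, n+2) = 0$ for every valid $i$ and $n$.

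Finally, I would specialize this forced vanishing to obtain the contradiction. Choosing $i = 1$ and $n = 1$ gives $x(2,3) = 0$: in a path of length three the middle agent is paid nothing. Since the agent at depth $2$ lies in $P \setminus r$, this contradicts PO (Definition \ref{def: po & ic}), which demands $x(\theta_i) > 0$ for precisely such agents. Because the forced-zero positions live at depth $n + 2 \geq 3$, the argument bites exactly when $n \geq 3$, matching the statement; for $n \leq 2$ there is no interior agent to be zeroed out, which is why the hypothesis cannot be dropped.

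I expect the main obstacle to be bookkeeping rather than conceptual. One must verify that every reward symbol appearing in the instances of $(\star)$ corresponds to a genuine agent position $1 \leq \cdot \leq \text{depth}$, so the recurrence is actually applicable, and confirm that the forced-zero agent is neither the root nor the solver but an honest interior node in $P \setminus r$, so that PO is truly violated. The \emph{Pascal-versus-flat-coefficient} comparison is the crux of the whole argument; more generally, comparing the $\lambda = 1$ relation iterated $m$ times against $(\star)$ with $\lambda = m$ would force an entire band of interior rewards to vanish, which reinforces the conclusion and shows the single middle-agent contradiction is the minimal witness.
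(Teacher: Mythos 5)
Your proposal is correct and follows essentially the same route as the paper's proof: iterate the $\lambda=1$ relation twice to obtain the Pascal coefficients $x(i,n+2)+2x(i+1,n+2)+x(i+2,n+2)$, compare against the flat-coefficient relation at $\lambda=2$, and conclude the middle reward must vanish, contradicting PO. The only cosmetic difference is that you first collapse SP and CP into the exact recurrence $(\star)$ and subtract equalities, whereas the paper keeps the two one-sided inequalities (SP iterated for the lower bound, CP at $\gamma=2$ for the upper bound) and deduces $x(i+1,n+2)\leq 0$ directly; your added care about which indices are genuine positions in $P\setminus r$ is a worthwhile refinement the paper leaves implicit.
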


\begin{proof}
    The proof is similar to those of \cite{nath2012mechanism, zhang2021sybil}. Assume there exists a mechanism that satisfies all three properties. According to Definition \ref{def: sp}, if the mechanism is SP, considering for $m=1$, we have
    \begin{equation}
    \label{eq: sp}
        \begin{aligned}
            x(i, n)&\geq x(i, n+1)+x(i+1, n+1) \\
            &\geq x(i, n+2) + 2x(i+1, n+2) + x(i+2, n+2)
        \end{aligned}
    \end{equation}
    Meanwhile, if the mechanism is also CP, we derive that
    \begin{equation}
    \label{eq: sc}
        \begin{aligned}
            x(i, n)& \leq \sum_{k=0}^{m=2}x(i+k, n+m)\\
            &= x(i, n+2) + x(i+1, n+2) + x(i+2, n+2)
        \end{aligned}
    \end{equation}

    For both Eqs. \ref{eq: sp} \& \ref{eq: sc} to hold, we need $x(i+1, n+2)\leq 0$, which violates to the definition of PO.
\end{proof}

The following theorem explains how $n$ affects the reward of the agent who solves the task for an SP (CP) mechanism.

\begin{theorem}
\label{thm: impossibility 2}
    If the Reward Mechanism is PO and SP (CP), then $x(n, n)$ is non-increasing (non-decreasing) in $n$. 
\end{theorem}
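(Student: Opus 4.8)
The plan is to reduce each claim to a single one-step comparison between paths of consecutive lengths and then to instantiate the defining inequality of SP (resp.\ CP) at the solver's own depth. Since ``non-increasing'' and ``non-decreasing'' are statements about the sequence $\{x(n,n)\}_{n}$, it suffices to compare $x(n,n)$ with $x(n+1,n+1)$ for every $n$ and to get the right sign; the full monotonicity then follows by transitivity across $n$. The correct monotonicity direction should be dictated entirely by the direction of the inequality in the relevant definition.

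For the SP case I would apply Definition~\ref{def: sp} with $\lambda = 1$ at the index $i = n$ (the agent who solves the task), which gives
\[
 x(n,n) \;\geq\; x(n,n+1) + x(n+1,n+1).
\]
The middle term $x(n,n+1)$ is the reward of the depth-$n$ agent inside a length-$(n+1)$ path; this agent lies in $P\setminus r$, so PO (Definition~\ref{def: po & ic}) forces $x(n,n+1) > 0$. Dropping this strictly positive term from the right-hand side yields $x(n,n) \geq x(n+1,n+1)$, so $x(n,n)$ is non-increasing in $n$. The only things to check here are that the solver index $i=n$ is admissible in the SP inequality with $\lambda=1$ and that PO lets us discard the intermediate-depth reward from the larger side.

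The CP case is the mirror image: applying Definition~\ref{def: cp} with $\gamma = 1$ at $i = n$ gives
\[
 x(n,n) \;\leq\; x(n,n+1) + x(n+1,n+1),
\]
where the inequality is now reversed, which is what should push $x(n,n)$ to be non-decreasing. I expect this direction to be the main obstacle: the strictly positive term $x(n,n+1)$ now sits on the \emph{larger} side, so—unlike in the SP case—PO cannot be used to discard it and isolate $x(n+1,n+1)$ directly; discarding it would only weaken the bound. The crux is therefore to control this intermediate-depth term, and I would attempt to absorb or cancel $x(n,n+1)$ by combining the CP inequalities instantiated at neighbouring indices (e.g.\ at $i=n$ and $i=n-1$) and across several values of $\gamma$, so that only the solver rewards $x(\cdot,\cdot)$ survive in a clean comparison. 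Making the non-solver terms telescope away while respecting PO is the delicate bookkeeping I anticipate being hardest, in contrast to the essentially immediate SP direction.
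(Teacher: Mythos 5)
Your SP half is correct and is essentially the paper's own argument: the paper instantiates Definition~\ref{def: sp} at $i=n$ with a general $\lambda$ and uses PO to discard the strictly positive terms $x(n+k,n+\lambda)$, $0\leq k<\lambda$, from the right-hand side; you do the same with $\lambda=1$ and then chain by transitivity. Both are valid (and both in fact give the strict inequality $x(n,n)>x(n+1,n+1)$, since PO is strict), and your direct formulation even sidesteps the paper's slightly sloppy contradiction setup, in which the negation of ``non-increasing'' is misstated as ``increasing.''

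For the CP half, however, the obstacle you flagged is not merely delicate bookkeeping---it is unfillable, and your planned telescoping cannot succeed. Every CP inequality, at every index $i$ and every $\gamma$, places the sum on the \emph{larger} side of a ``$\leq$,'' so any positive combination of such inequalities can only ever bound $x(n,n)$ above by a sum that still contains strictly positive non-solver terms; PO makes those terms a help in the SP direction and a strict hindrance here. Indeed the claim itself fails: take the TDGM-type rewards $x(i,n)=\alpha^{n-i}\beta(n)$ with $\alpha=\tfrac12$, $\beta(1)=1$, and $\beta(n)=\tfrac{9}{10}$ for all $n\geq 2$. By the computation in the proof of Theorem~\ref{thm: mechanism 1 sp & cp}, collusion-proofness for this family is \emph{equivalent} to
\[
\beta(n)\;\leq\;\frac{1-\alpha^{m+1}}{1-\alpha}\,\beta(n+m)\qquad\text{for all } n,m\geq 1,
\]
whose binding case is $n=m=1$, requiring $\beta(2)\geq \beta(1)/(1+\alpha)=\tfrac23$; this holds, all rewards are strictly positive (PO), yet $x(1,1)=1>\tfrac{9}{10}=x(2,2)$, so $x(n,n)$ is not non-decreasing. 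The paper's own proof of this half commits exactly the fallacy you refused: from $x(n,n)\leq\sum_{k=0}^{\lambda}x(n+k,n+\lambda)$ and positivity it concludes $x(n,n)<x(n+\lambda,n+\lambda)$, i.e., it deletes strictly positive terms from the larger side of a ``$\leq$,'' which is an invalid strengthening. So your proposal proves the SP direction correctly, and your diagnosis of the CP direction identifies a genuine error in the paper: as stated, PO and CP imply nothing like monotonicity of $x(n,n)$, only constraints of the displayed summed form.
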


\begin{proof}
    (\textbf{PO, SP and $x(n, n)$ is non-increasing}) We prove this by contradiction. Assume the mechanism is PO, SP, and $x(n, n)$ is increasing in $n$. Since the mechanism is SP (Definition \ref{def: sp}), for any integer $\lambda$, we have
    \begin{equation*}
        \begin{aligned}
            x(n, n) \geq \sum_{k=0}^{\lambda}x(n+k, n+\lambda).
        \end{aligned}
    \end{equation*}
    
    Since the mechanism is also PO (Definition \ref{def: po & ic}), $x(i+k, n+\lambda)>0$, for any $k\in[0, \lambda]$. Therefore, we have $x(n, n)>x(n+\lambda, n+\lambda)$, which contradicts to $x(n, n)$ is increasing in $n$.
    
    (\textbf{PO, CP and $x(n, n)$ is non-decreasing}) Similar to previous part, assume $x(n, n)$ is decreasing in $n$. Since the mechanism is CP (Definition \ref{def: cp}), for any integer $\lambda$, we have $x(n, n) \leq \sum_{k=0}^{\lambda}x(n+k, n+\lambda)$. As the mechanism is PO, $x(i+k, n+\lambda)>0$, for any $k\in[0, \lambda]$. Hence, $x(n, n) <x(n+\lambda, n+\lambda)$, which contradicts to $x(n, n)$ is decreasing in $n$.
\end{proof}

\section{Tree Dependent Geometric Mechanism}
\label{sec: mechanism 1}

Inspired by the work \cite{emek2011mechanisms} in multi-level marketing, in this section, we generalize a geometric reward mechanism into our model, which achieves all basic properties described in Section \ref{subsec: property} and one of SP and CP. This mechanism is called Tree (Topology) Dependent Geometric Mechanism (TDGM).  

\begin{mechanism}[Tree Dependent Geometric Mechanism]
    Given the agents' report file $\theta$ and the corresponding task allocation path $P$, the reward policy of the TDGM is defined as 
    \begin{equation}
    \label{eq: reward function} 
        x(i, n)=\alpha^{n-i}\beta,
    \end{equation}
    for all $i\in P\setminus r$, $0<\alpha<1$, and $0<\beta \leq \frac{1-\alpha}{1-\alpha^{n}} \Pi$.
\end{mechanism}

The understanding of TDGM is intuitive. Each agent is rewarded according to their contribution to the task. Contributions can be divided into two categories: inviting others or solving the issue. For agent $n$ who solves the task is rewarded $\beta$ (we will characterize $\beta$ later). For ancestors of agent $n$, they receive a certain fraction $\alpha$ of the rewards of their children.

\begin{theorem}
\label{thm: mechanism 1 property}
    TDGM satisfies IC, PO, BB, and $\alpha-$split.
\end{theorem}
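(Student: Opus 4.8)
The plan is to dispatch the three ``static'' properties by direct substitution into the closed form $x(i,n)=\alpha^{n-i}\beta$, and to reserve the real effort for IC. For PO, since $0<\alpha<1$ and $\beta>0$ we have $x(i,n)=\alpha^{n-i}\beta>0$ for every $i\in P\setminus r$, so PO is immediate. For $\alpha$-split I would simply note $x(i,n)=\alpha^{n-i}\beta=\alpha\cdot\alpha^{\,n-i-1}\beta=\alpha\,x(i+1,n)$, so the required inequality $x(i,n)\ge\alpha\,x(i+1,n)$ in fact holds with equality. For BB I would sum the geometric series along the path, $\sum_{i=1}^{n}x(i,n)=\beta\sum_{i=1}^{n}\alpha^{n-i}=\beta\sum_{j=0}^{n-1}\alpha^{j}=\beta\,\frac{1-\alpha^{n}}{1-\alpha}$, and then invoke the hypothesis $\beta\le\frac{1-\alpha}{1-\alpha^{n}}\Pi$ to conclude the total is at most $\Pi$; this is precisely why the bound on $\beta$ takes the form it does.

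The substantive work is IC, which I would prove as a dominant-strategy statement: fix the reports of all agents other than $i$ arbitrarily, and fix $i$'s position, whose depth $i$ is determined by its ancestors and is outside $i$'s control once $i$ is reachable. Two observations drive everything: (i) for fixed $i$ the reward $x(i,n)=\alpha^{n-i}\beta$ is non-increasing in the solver depth $n$ because $0<\alpha<1$ (and it only decreases faster if $\beta$ is tuned to $n$ via $\beta=\frac{1-\alpha}{1-\alpha^{n}}\Pi$), and (ii) the largest reward attainable at depth $i$ is $x(i,i)=\beta$, realized exactly when $i$ is itself the solver. Agent $i$ controls only its response $resp_i'$ and the revealed child set $c_i'\subseteq c_i$, so I would analyze each lever separately. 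For the propagation lever I would argue a monotonicity claim: enlarging the revealed child set can only enlarge the set of reachable answerers inside $i$'s subtree, and since the solver is chosen as the shallowest reachable answerer, this can only decrease (or leave unchanged) the solver depth whenever the solver lies in $i$'s subtree; moreover descendants of $i$ are reachable only through $i$, so revealing children never removes $i$ from the winning path. Combined with observation (i), full propagation $c_i'=c_i$ weakly maximizes $i$'s reward.

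For the response lever I would split on $i$'s true ability. Write $d^{\ast}$ for the depth of the shallowest reachable answerer lying outside $i$'s subtree; $d^{\ast}$ is fixed by the others' reports and unaffected by $i$. If $resp_i=1$, answering makes $i$ an answerer at its own depth $i$, so $i$ becomes the global solver precisely when $i<d^{\ast}$, earning the maximal $x(i,i)=\beta$; withholding the answer can only pass the solver role to a strictly deeper descendant or to the external branch, which never raises $i$'s reward. If $resp_i=0$, a false claim $resp_i'=1$ can at best get $i$ selected as a solver it cannot actually deliver, yielding zero and possibly displacing a genuine solution, so it is never profitable. In every case truthful reporting of $resp_i$ is weakly optimal, and together with full propagation this shows truthful reporting weakly dominates every deviation.

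I expect the main obstacle to be the bookkeeping in the response lever, in particular making explicit the verifiability assumption that a falsely claimed answer cannot be delivered, and pinning down why an able agent on the winning path is forced to be the solver, so that $n=i$ and its reward equals $\beta$. A secondary subtlety I would flag is tie-breaking: when $i$'s shallowest subtree answerer ties in depth with a competing branch, the random selection yields only an expected reward, and I would verify that full propagation and truthful answering remain optimal after accounting for such ties, since any deviation can only worsen or preserve $i$'s depth standing and hence its (expected) reward.
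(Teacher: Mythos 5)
Your proposal is correct and takes essentially the same route as the paper: PO, $\alpha$-split, and BB follow from the identical direct computations (the same geometric series giving $\sum_{i=1}^{n}x(i,n)=\frac{1-\alpha^{n}}{1-\alpha}\beta\leq\Pi$, which is exactly why the bound on $\beta$ is stated as it is), and IC rests on the same core observation that any misreport either leaves the agent on the path with a strictly deeper solver, reducing the reward from $\alpha^{n-i}\beta$ to $\alpha^{n'-i}\beta<\alpha^{n-i}\beta$ since $0<\alpha<1$, or removes the agent from the path entirely, yielding zero. Your dominant-strategy elaboration of IC (separating the response and propagation levers, making the verifiability assumption explicit, and checking tie-breaking) is more careful than the paper's informal two-case argument, but it is a refinement of the same idea rather than a different approach.
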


\begin{proof}
    (\textbf{IC}) We start our proof from IC. Considering the agent $n$ who solves the task, if he does not provide the solution, he would be either in the solution path $P$ or not. If he is still in the path $P$, his reward is $\alpha^{n^{\prime}-i}\beta,$ where $n^{\prime}>n$. Since $\alpha<1$ and $n^{\prime}>n$, $\alpha^{n^{\prime}-i}\beta < \beta$ ($\beta$ is the reward if he solves the task). Moreover, if he is not in the path $P$, he receives nothing from the owner. As a result, agent $n$ is worse off from misreporting. Similarly, for other agents $i \in P \setminus \{r, n\}$. If these agents do not spread the information successfully, they may not be in the path $P$, which leads to no reward for them.
    
    (\textbf{PO \& $\alpha-$split}) Since $0<\alpha<1$, $0<\beta<\frac{1-\alpha}{1-\alpha^{n}}\Pi$ and $0\leq n-i \leq n$, it is obvious that $x(i, n)=\alpha^{n-i}\beta>0$. Moreover, $\frac{x(i, n)}{x(i+1, n)}=\alpha=\rho$.
    
    (\textbf{BB}) Summarizing all the rewards, we derive that
    \begin{equation*}
        \begin{aligned}
            \sum_{i=1}^{n} x(i, n)&=\sum_{i=1}^{n} \alpha^{n-i}\beta\\
            &= \alpha^{n}\beta \cdot \sum_{i=1}^{n} \alpha^{-i}\\
            &= \alpha^{n}\beta \cdot \frac{\alpha^{-1}(1-\alpha^{-n})}{1-\alpha^{-1}}\\
            &= \frac{1-\alpha^{n}}{1-\alpha}\beta
        \end{aligned}
    \end{equation*}
    Since $\beta \leq \frac{1-\alpha}{1-\alpha^{n}} \Pi$, we have $\sum_{i=1}^{n} x(i, n) \leq \Pi$.
    
\end{proof}

Theorem \ref{thm: mechanism 1 property} reveals that TDGM satisfies all basic properties discussed in Section \ref{subsec: property}. 

As a next step, we characterize $\beta$, and analyze the mechanism w.r.t. Sybil-proof and collusion-proof. Based on our discussion of Theorem \ref{thm: impossibility 2}, the reward function of an agent who completes the task should be dependent on 
$n$. Hence, we denote it as a function of $n$ and $\Pi$ such that $\beta=\beta(n, \Pi)$.

\begin{theorem}
\label{thm: mechanism 1 sp & cp}
    TDGM is SP if $\beta(n, \Pi)$ follows 
    \begin{equation}
    \label{eq: condition SP}
        \begin{aligned}
            \beta(n, \Pi)-\beta(n+m, \Pi)\frac{1-\alpha^{m+1}}{1-\alpha} \geq 0,
        \end{aligned}
    \end{equation}
    and it is CP if $\beta(n, \Pi)$ follows 
    \begin{equation}
    \label{eq: condition CP}
        \begin{aligned}
            \beta(n, \Pi)-\beta(n+m, \Pi)\frac{1-\alpha^{m+1}}{1-\alpha} \leq 0,
        \end{aligned}
    \end{equation}
    for any $m \in \mathbb{N}^+$. Furthermore, $0<\beta(n, \Pi)\leq \frac{1-\alpha}{1-\alpha^{n}}\Pi$ in order to satisfy the Budget Balanced condition.
\end{theorem}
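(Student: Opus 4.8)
The plan is to substitute the explicit TDGM reward $x(i,n)=\alpha^{n-i}\beta(n,\Pi)$ from Eq.~\ref{eq: reward function} directly into the defining inequalities for Sybil-proofness (Definition~\ref{def: sp}) and collusion-proofness (Definition~\ref{def: cp}), and to simplify the resulting geometric sums until every dependence on the agent's depth $i$ disappears. For the SP direction I would fix an arbitrary agent $i\in P$ and an arbitrary $m\geq 1$, and rewrite the right-hand side of the SP inequality as
\begin{equation*}
    \sum_{k=0}^{m}x(i+k,\,n+m)=\beta(n+m,\Pi)\sum_{k=0}^{m}\alpha^{(n+m)-(i+k)}.
\end{equation*}
Re-indexing the exponent by $j=m-k$ factors out $\alpha^{n-i}$ and collapses the sum to the closed form $\alpha^{n-i}\frac{1-\alpha^{m+1}}{1-\alpha}$, using $0<\alpha<1$.

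The key step is the cancellation: the left-hand side $x(i,n)=\alpha^{n-i}\beta(n,\Pi)$ carries the same factor $\alpha^{n-i}$, which is strictly positive, so dividing through leaves exactly
\begin{equation*}
    \beta(n,\Pi)\;\geq\;\beta(n+m,\Pi)\,\frac{1-\alpha^{m+1}}{1-\alpha},
\end{equation*}
i.e.\ Eq.~\ref{eq: condition SP}. I would emphasize that this inequality no longer involves $i$, so a single condition on $\beta$ certifies the SP requirement uniformly for every agent in the path; since $m$ was arbitrary, requiring it for all $m\in\mathbb{N}^+$ yields full Sybil-proofness. The CP direction is the mirror image: the identical substitution and cancellation turn the reversed inequality of Definition~\ref{def: cp} into Eq.~\ref{eq: condition CP}, so I would simply remark that reversing $\geq$ to $\leq$ throughout reproduces the argument verbatim.

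Finally, the budget constraint $0<\beta(n,\Pi)\leq\frac{1-\alpha}{1-\alpha^{n}}\Pi$ is inherited directly from the BB part of Theorem~\ref{thm: mechanism 1 property}, since that bound was derived for the value $\beta$ takes at depth $n$ regardless of how $\beta$ depends on $n$; no new computation is needed. The only point requiring care — the \emph{obstacle}, though a mild one — is confirming that the factor $\alpha^{n-i}$ really is common to both sides and positive for every admissible $i$, because this is precisely what decouples the condition from $i$ and lets a depth-only function $\beta(n,\Pi)$ control SP/CP for all agents at once; I would verify the exponent bookkeeping in the re-indexing step explicitly to ensure the $\frac{1-\alpha^{m+1}}{1-\alpha}$ factor (with $m+1$ terms) is correct.
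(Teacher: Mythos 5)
Your proposal is correct and follows essentially the same route as the paper: both substitute $x(i,n)=\alpha^{n-i}\beta(n,\Pi)$ into Definitions~\ref{def: sp} and \ref{def: cp}, collapse the geometric sum to $\alpha^{n-i}\frac{1-\alpha^{m+1}}{1-\alpha}\beta(n+m,\Pi)$ (the paper sums $\alpha^{-k}$ directly where you re-index by $j=m-k$, an immaterial difference), and cancel the common positive factor $\alpha^{n-i}$ to obtain the depth-independent conditions on $\beta$. Your handling of the budget bound via Theorem~\ref{thm: mechanism 1 property} likewise matches the paper's treatment.
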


\begin{proof}
    To check the SP and CP of TDGM, we expand $\sum_{k=0}^{m}x(i+k, n+m)$ and compare it with $x(i, n)$,
    \begin{equation*}
        \begin{aligned}
            \sum_{k=0}^{m}x(i+k, n+m)&=\sum_{k=0}^{m}\alpha^{n+m-i-k}\beta(n+m, \Pi)\\
            &= \alpha^{n+m-i}\beta(n+m, \Pi)\sum_{k=0}^{m}\alpha^{-k}\\
            &= \alpha^{n+m-i}\beta(n+m, \Pi)\cdot\frac{1-\alpha^{-(m+1)}}{1-\alpha^{-1}}\\
            &= \alpha^{n-i}\frac{1-\alpha^{m+1}}{1-\alpha}\beta(n+m, \Pi).
        \end{aligned}
    \end{equation*}
    If TDGM is SP, by Definition \ref{def: sp}, we have $x(i, n)\geq \sum_{k=0}^{m}x(i+k, n+m)$. Hence, 
    \begin{equation*}
        \begin{aligned}
            \alpha^{n-i}\beta(n, \Pi)-\alpha^{n-i}\frac{1-\alpha^{m+1}}{1-\alpha}\beta(n+m, \Pi)\geq 0\\
            \beta(n, \Pi)-\frac{1-\alpha^{m+1}}{1-\alpha}\beta(n+m, \Pi)\geq 0
        \end{aligned}
    \end{equation*}
    The condition of CP can be derived in a similar way.
\end{proof}

Theorem \ref{thm: mechanism 1 sp & cp} characterizes the reward function of the agent who solves the task in order to satisfy SP and CP, respectively. Furthermore, Theorem \ref{thm: mechanism 1 sp & cp} supplements the result of Theorem \ref{thm: impossibility 2}. 

Interestingly, the Double Geometric Mechanism (DGM) \cite{zhang2021sybil} which is Sybil-proof, is a sub-class of TDGM, and the $\delta$-Geometric Mechanism ($\delta$-GEOM) \cite{nath2012mechanism} which is collusion-proof also belongs to a TDGM with a certain condition. The proof can be found in the full version.

\begin{corollary}
\label{lemma: family}
    The reward mechanisms proposed in \cite{zhang2021sybil} (DGM) and \cite{nath2012mechanism} ($\delta$-GEOM) belong to a family of TDGM.
\end{corollary}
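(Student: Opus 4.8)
The plan is to recall the explicit reward formulas of DGM from \cite{zhang2021sybil} and of $\delta$-GEOM from \cite{nath2012mechanism}, and then exhibit, for each, a concrete choice of the ratio $\alpha$ and the solver-reward function $\beta(n, \Pi)$ so that the TDGM formula $x(i, n) = \alpha^{n-i}\beta(n, \Pi)$ reproduces it verbatim. The essential structural observation is that both mechanisms reward an agent as a product of a geometric factor depending only on the distance $n-i$ from the solver and a factor depending only on the total depth $n$ (and the budget $\Pi$). This is precisely the separable form that TDGM permits once $\beta$ is allowed to vary with $n$, as anticipated in the discussion following Theorem~\ref{thm: mechanism 1 sp & cp}.

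First I would write DGM's reward in the separated form, read off $\alpha$ as its ``distance'' ratio and $\beta(n, \Pi)$ as the remaining $n$-dependent factor, and then check the three admissibility conditions of Theorem~\ref{thm: mechanism 1 sp & cp}: that $0 < \alpha < 1$; that the budget constraint $0 < \beta(n, \Pi) \leq \frac{1-\alpha}{1-\alpha^{n}}\Pi$ holds; and, because DGM is known to be Sybil-proof, that the SP inequality~\eqref{eq: condition SP} is satisfied. Since DGM is already established as Sybil-proof in \cite{zhang2021sybil}, the content of this step is merely to confirm that the specific $\beta(n, \Pi)$ obtained from the factorization is indeed the one satisfying~\eqref{eq: condition SP}, which reduces to a routine algebraic comparison. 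I would then repeat the identical procedure for $\delta$-GEOM, identifying its $\alpha$ (likely tied to the parameter $\delta$) and corresponding $\beta(n, \Pi)$, and verifying the CP inequality~\eqref{eq: condition CP} instead, consistent with $\delta$-GEOM being collusion-proof. By Theorem~\ref{thm: mechanism 1 sp & cp}, the SP/CP behavior of any TDGM is governed entirely by the sign of $\beta(n, \Pi) - \frac{1-\alpha^{m+1}}{1-\alpha}\beta(n+m, \Pi)$, so once each factorization is in hand the property check is immediate.

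The main obstacle is not the verification but the bookkeeping of reconciling conventions: the cited papers may index depth, normalize the budget, or parameterize their geometric ratios differently from our $(i, n, \alpha, \beta, \Pi)$ notation, so the crux is to translate their formulas faithfully into our separable form and confirm that the distance factor is genuinely a pure power of a single ratio $\alpha$. The decisive requirement is the clean separation into ``(function of $n-i$) times (function of $n$)''; if either mechanism's reward instead contained a cross term coupling $i$ and $n$ that does not collapse into $\alpha^{n-i}$, it would fall outside the TDGM family. Hence establishing this separation, rather than the subsequent inequality checks, is where the argument must do its real work.
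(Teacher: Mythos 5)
Your proposal matches the paper's own proof essentially step for step: the paper reads off $\alpha$ from the split ratio (setting $\alpha=\frac{\alpha_{DGM}}{1-\alpha_{DGM}}$ so that DGM's cross term $(1-\alpha_{DGM})^{i-1}$ collapses into the pure $n$-dependent factor $\beta(n,\Pi)=\bigl(\tfrac{1}{1+\alpha}\bigr)^{n-1}\Pi$, and taking $\delta=\alpha$ with $\beta(n,\Pi)=\frac{1-\alpha}{1-\alpha^{n}}\Pi$ for $\delta$-GEOM), then verifies conditions~\eqref{eq: condition SP} and~\eqref{eq: condition CP} respectively, exactly as you outline. You also correctly located where the real work lies, namely the reparameterization that makes the distance factor a pure power of a single ratio, which is precisely the $\alpha$-split computation the paper uses.
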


Following then, we check how SP-TDGM (Sybil-proof-TDGM) and CP-TDGM (collusion-proof-TDGM) can achieve approximation versions of collusion-proof and Sybil-proof, respectively.

\begin{lemma}
\label{thm: mechanism 1 approximation sp & cp}
    SP-TDGM is 2-CP, and CP-TDGM can never be SP.
\end{lemma}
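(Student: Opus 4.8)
The plan is to leverage the exact characterizations of SP and CP from Theorem \ref{thm: mechanism 1 sp & cp}, which translate both properties into single-variable inequalities about $\beta(n,\Pi)$. For the first claim (SP-TDGM is 2-CP), I need to show that a mechanism satisfying the SP condition (Eq. \ref{eq: condition SP}) automatically satisfies the $2$-CP condition. Recall that $(\gamma+1)$-CP for $\gamma=1$ (i.e., $2$-CP) is, by Definition \ref{def: cp}, the inequality $x(i,n) \leq x(i,n+1) + x(i+1,n+1)$, which corresponds to taking $m=1$ in the sum $\sum_{k=0}^{m} x(i+k, n+m)$. So I would specialize the computation from the proof of Theorem \ref{thm: mechanism 1 sp & cp} at $m=1$: the CP-direction inequality becomes
\begin{equation*}
    \beta(n, \Pi) - \frac{1-\alpha^{2}}{1-\alpha}\beta(n+1, \Pi) \leq 0,
\end{equation*}
i.e. $\beta(n,\Pi) \leq (1+\alpha)\beta(n+1,\Pi)$.

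**The key step** is then to argue this follows from the SP condition. Taking $m=1$ in Eq. \ref{eq: condition SP} gives the SP constraint $\beta(n,\Pi) \geq (1+\alpha)\beta(n+1,\Pi)$, which is the wrong direction, so a naive single-step comparison does not close the gap. Instead I would exploit that SP holds for \emph{all} $m \geq 1$ simultaneously. The natural route is to use Theorem \ref{thm: impossibility 2}: since an SP, PO mechanism has $x(n,n)=\beta(n,\Pi)$ non-increasing in $n$, summing the geometric tail and bounding it should yield the $2$-CP inequality. Concretely, I would combine the monotonicity $\beta(n+1,\Pi) \leq \beta(n,\Pi)$ with the budget/structural relations to sandwich $\beta(n,\Pi)$ between $\beta(n+1,\Pi)$ and $(1+\alpha)\beta(n+1,\Pi)$. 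The anticipated obstacle is precisely reconciling the two opposite-direction bounds: the SP condition pushes $\beta(n,\Pi)$ up relative to $\beta(n+1,\Pi)$, while $2$-CP caps it from above, so the proof must show the SP-consistent choices of $\beta$ cannot overshoot the $(1+\alpha)$ factor — which should hold because SP with larger $m$ forces $\beta$ to decay no faster than a controlled geometric rate.

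**For the second claim** (CP-TDGM can never be SP), the argument is cleaner and essentially an incompatibility check. A CP-TDGM satisfies Eq. \ref{eq: condition CP}, namely $\beta(n,\Pi) \leq \frac{1-\alpha^{m+1}}{1-\alpha}\beta(n+m,\Pi)$ for all $m$, while SP would require the reverse inequality Eq. \ref{eq: condition SP} for all $m$. If both held, then for every $m$ we would need equality $\beta(n,\Pi) = \frac{1-\alpha^{m+1}}{1-\alpha}\beta(n+m,\Pi)$. I would show this system is inconsistent with PO: since $\frac{1-\alpha^{m+1}}{1-\alpha} > 1$ for $0<\alpha<1$ and $m\geq 1$, forcing equality simultaneously across all $m$ drives $\beta(n+m,\Pi)$ to be strictly decreasing at a rate incompatible with staying positive, or directly contradicts Theorem \ref{thm: impossibility 2} (a CP, PO mechanism needs $x(n,n)=\beta(n,\Pi)$ non-decreasing in $n$, whereas the equality relation forces it strictly decreasing). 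This monotonicity clash is the crispest finish: CP forces $\beta$ non-decreasing while any SP-compatible equality forces it strictly decreasing, so no PO mechanism can be both. I expect the first claim to be the genuinely delicate part, and the second to reduce to invoking Theorem \ref{thm: impossibility 2} plus positivity.
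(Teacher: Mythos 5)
Your treatment of the second claim is correct and essentially identical to the paper's own argument: CP together with PO forces $x(n,n)=\beta(n,\Pi)$ to be non-decreasing (Theorem \ref{thm: impossibility 2}), hence $\frac{\beta(n,\Pi)}{\beta(n+m,\Pi)} \leq 1 < \frac{1-\alpha^{m+1}}{1-\alpha}$ for every $m \geq 1$, so the SP inequality (which under CP could only ever hold with equality) fails at every level $\lambda$. No issues there.

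The first claim is where your proposal has a genuine gap, and the rescue you sketch cannot work. You correctly reduce $2$-CP to $\beta(n,\Pi) \leq (1+\alpha)\beta(n+1,\Pi)$ and correctly observe that SP at $m=1$ gives the reverse inequality; but your hope that ``SP with larger $m$ forces $\beta$ to decay no faster than a controlled geometric rate'' is backwards. Every SP constraint has the form $\beta(n,\Pi) \geq \frac{1-\alpha^{m+1}}{1-\alpha}\beta(n+m,\Pi)$: these are all \emph{lower} bounds on the ratio $\beta(n,\Pi)/\beta(n+m,\Pi)$, i.e., they force $\beta$ to decay \emph{at least} geometrically (by a factor $1+\alpha$ per step) and place no cap on faster decay. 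PO is a positivity condition, BB only upper-bounds $\beta$, and the monotonicity from Theorem \ref{thm: impossibility 2} is yet another lower bound on the ratio, so no combination of these yields the upper bound you need; the sandwich you propose does not exist. Concretely, $\beta(n,\Pi) = \bigl(2(1+\alpha)\bigr)^{-(n-1)}(1-\alpha)\Pi$ satisfies PO, BB, and Eq. \ref{eq: condition SP} for every $m$ (since $(2(1+\alpha))^{m} \geq (1+\alpha)^{m} \geq \frac{1-\alpha^{m+1}}{1-\alpha}$), yet has $\beta(n,\Pi)/\beta(n+1,\Pi) = 2(1+\alpha) > 1+\alpha$, violating $2$-CP. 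What the paper actually does is different in kind: it takes SP-TDGM to be the \emph{boundary} member of the family where the binding $m=1$ constraint holds with equality, $x(i,n) = x(i,n+1)+x(i+1,n+1)$ --- exactly what DGM from Corollary \ref{lemma: family} satisfies, since $\beta(n,\Pi)=(1+\alpha)^{-(n-1)}\Pi$ gives ratio $(1+\alpha)^{m}$, equal to $\frac{1-\alpha^{m+1}}{1-\alpha}$ at $m=1$ and strictly larger for $m \geq 2$. Under that reading a two-agent merger is exactly reward-neutral (the weak $2$-CP inequality) while larger coalitions strictly gain, which is the lemma's content. To be fair, your instinct that this half is delicate was sound --- the paper's own proof is informal here (it asserts the $m=1$ equality only ``intuitively''), and as your difficulty reveals, the claim is not a consequence of Eq. \ref{eq: condition SP} alone; the missing idea is tightness at $m=1$, not a decay-rate bound derivable from SP.
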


\begin{proof}
    If a mechanism can achieve SP and CP simultaneously, then 
    \begin{equation*}
        \begin{aligned}
            x(i, n)=\sum_{k=0}^{m}x(i+k, n+m)\\
            \alpha^{n-i}\beta(n,\Pi)=\sum_{k=0}^{m}\alpha^{n+m-i-k}\beta(n+m, \Pi)\\
            \beta(n, \Pi)=\beta(n+m, \Pi)\frac{1-\alpha^{m+1}}{1-\alpha}\\
            \frac{\beta(n, \Pi)}{\beta(n+m, \Pi)}=\frac{1-\alpha^{m+1}}{1-\alpha}
        \end{aligned}
    \end{equation*}
    
    (\textbf{SP-TDGM is 2-CP}) Since TDGM is SP, by the condition of SP (Eq. \ref{eq: condition SP}), we derive $\frac{\beta(n, \Pi)}{\beta(n+m, \Pi)} \geq \frac{1-\alpha^{m+1}}{1-\alpha}$.
    
    Note that $\frac{1-\alpha^{m+1}}{1-\alpha}$ is increasing in $m$ and $m>0$, thus, $\frac{1-\alpha^{m+1}}{1-\alpha}$ minimized at $m=1$. Intuitively, $x(i, n)=\sum_{k=0}^{m}x(i+k, n+m)$ when $m=1$, and $x(i, n)>\sum_{k=0}^{m}x(i+k, n+m)$ for all integer $m\geq 2$. 
    
    Therefore, SP-TDGM is 2-CP.
    
    (\textbf{CP-TDGM is non-SP}) Since TDGM is CP, by Theorem \ref{thm: impossibility 2}, we have $\beta(n, \Pi) \leq \beta(n+m, \Pi)$. Hence,
    \begin{equation*}
        \begin{aligned}
            \frac{\beta(n, \Pi)}{\beta(n+m, \Pi)} \leq 1 <\frac{1-\alpha^{m+1}}{1-\alpha},
        \end{aligned}
    \end{equation*}
    for any integer $m>1$ and $0<\alpha<1$. Therefore, the equation $\frac{\beta(n, \Pi)}{\beta(n+m, \Pi)}=\frac{1-\alpha^{m+1}}{1-\alpha}$ never holds.
    
    As a result, CP-TDGM is not $\lambda$-SP for any $\lambda$.
\end{proof}

Lemma \ref{thm: mechanism 1 approximation sp & cp} shows that agents benefit from forming a coalition group with a minimum size of two under SP-TDGM, whereas all agents gain more rewards from any Sybil attacks under CP-TDGM.

The \textbf{core} property is widely discussed in cooperative game theory to ensure a stable and fair outcome of the mechanism. Intuitively, a reward allocation has a \textbf{core} property if there are no other coalition groups that can improve agents' rewards. A mechanism is core-selecting if its outcome has a core property. The formal definition is provided in the full version.

\begin{lemma}
\label{thm: mechanism 1 core}
    TDGM is a core-selecting mechanism.
\end{lemma}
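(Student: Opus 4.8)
The plan is to recast the solution path $P=\{r,a_1,\dots,a_n\}$ as a transferable-utility cooperative game $(N,v)$ whose players are the rewarded agents $N=P\setminus r$, and then verify the two defining requirements of the core: budget-feasibility (efficiency) and coalitional rationality $\sum_{i\in S}x(i,n)\ge v(S)$ for every $S\subseteq N$, where $v(S)$ is the largest aggregate reward the members of $S$ can secure by a joint feasible re-report. Efficiency is immediate from Theorem \ref{thm: mechanism 1 property}: budget balance gives $\sum_i x(i,n)\le\Pi$, so the prescribed allocation is feasible for the owner. The substance lies in the coalitional-rationality inequalities.

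First I would reduce the family of blocking coalitions that must be checked. Because the query network is a rooted chain and an agent may only report a subset $c'_i\subseteq c_i$ of his true children (he cannot rewire his parent), the only joint manipulations available to a set of agents are pruning, contiguous collusion (a block of consecutive agents collapsing to one identity), and Sybil inflation (a member expanding into several identities). Pruning disconnects the solver and yields zero, so by PO it never blocks; and any non-contiguous deviation decomposes into independent moves on the maximal contiguous blocks it touches. Hence it suffices to bound $v(S)$ for contiguous $S$ and to track the induced change in path length.

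Next I would evaluate these blocks in closed form. For a contiguous block of $m+1$ agents starting at depth $i$, the aggregate geometric reward is exactly $\alpha^{n-i}\frac{1-\alpha^{m+1}}{1-\alpha}\beta(\cdot,\Pi)$, the quantity already computed in the proof of Theorem \ref{thm: mechanism 1 sp & cp}. Comparing the block's reward under the prescribed allocation with its reward after a length-changing deviation therefore cancels the common factor $\alpha^{n-i}$ and collapses to a single inequality between the ratio $\beta(n,\Pi)/\beta(n\pm m,\Pi)$ and $\frac{1-\alpha^{m+1}}{1-\alpha}$ --- precisely the SP condition of Eq. \ref{eq: condition SP}, the CP condition of Eq. \ref{eq: condition CP}, and the monotonicity of $x(n,n)=\beta(n,\Pi)$ guaranteed by Theorem \ref{thm: impossibility 2}. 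Feeding the relevant condition into each block constraint discharges coalitional rationality, so no coalition can strictly improve and the outcome is in the core.

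The hard part will be the first step: making the characteristic function $v(S)$ well defined in the presence of the externalities that a collusion or Sybil move imposes on agents outside $S$, and justifying rigorously that it is enough to examine contiguous blocks. Once $v(S)$ is pinned down to the geometric aggregates above, the remaining inequalities are immediate consequences of the already-established $\alpha$-split structure together with Theorems \ref{thm: impossibility 2} and \ref{thm: mechanism 1 sp & cp}; the conceptual work is entirely in the reduction, not in the algebra.
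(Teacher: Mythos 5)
There is a genuine flaw, and it lies exactly where you located the difficulty: your choice of characteristic function makes the statement you set out to prove false. If $v(S)$ is ``the largest aggregate reward the members of $S$ can secure by a joint feasible re-report,'' including collusive merging and Sybil splitting, then coalitional rationality $\sum_{i\in S}x(i,n)\ge v(S)$ would require TDGM to deter \emph{both} kinds of identity manipulation at every block size --- which Theorem \ref{thm: impossibility 1} rules out for any PO mechanism, and which the paper's own Lemma \ref{thm: mechanism 1 approximation sp & cp} explicitly refutes for each branch of TDGM. Concretely: under SP-TDGM, a contiguous block of $m+1\ge 3$ on-path agents that merges into a single identity shortens the path from $n$ to $n-m$ and collects $x(i,n-m)$, and the computation in Lemma \ref{thm: mechanism 1 approximation sp & cp} gives $x(i,n-m)>\sum_{k=0}^{m}x(i+k,n)$ for $m\ge 2$, so this block \emph{blocks} the outcome; symmetrically, under CP-TDGM a singleton blocks by a Sybil split, since CP-TDGM is not $\lambda$-SP for any $\lambda$. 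Your final step --- ``feeding the relevant condition into each block constraint discharges coalitional rationality'' --- is therefore backwards: Eq.~\ref{eq: condition SP} (resp.\ Eq.~\ref{eq: condition CP}) is precisely the statement that the merge (resp.\ split) deviation is weakly profitable, and since no TDGM instance satisfies both conditions simultaneously, one family of blocking deviations always survives under your $v(S)$.

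The paper escapes this because its notion of core (stated in the appendix, following \cite{rahwan2014towards}) evaluates $W(S)=W(S,resp,c)$ at the \emph{true} types: the contemplated deviations are re-coalitions of real agents --- in particular, on-path agents teaming with off-path agents to form an alternative path --- not false-name or identity-merging misreports, which the paper deliberately quarantines into the separate SP/CP analysis. Its argument is then short: admitting off-path agents lengthens the path, so the shortest-path allocation rule deselects the coalition (payoff zero by PO-restricted-to-$P$), and for SP-TDGM Theorem \ref{thm: impossibility 2} makes $\beta(n,\Pi)$ non-increasing in $n$, so a longer path pays every member less; hence no coalition improves on the prescribed allocation. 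Your reduction to contiguous blocks and the geometric aggregate $\alpha^{n-i}\frac{1-\alpha^{m+1}}{1-\alpha}\beta(\cdot,\Pi)$ is correct algebra, but it answers a different and strictly stronger claim that is false; to prove the lemma as intended you must first restrict $v(S)$ to exclude identity manipulations, at which point the content reduces to the paper's path-length argument rather than to Eqs.~\ref{eq: condition SP} and \ref{eq: condition CP}.
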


\begin{proof}
    The sketch proof is intuitive. Note that the allocation path is selected via the shortest path, and an agent can only join the mechanism via referrals. For agents $i \in P$, if they collude with agents out of path $P$, it increases the length of $P$, and the owner may choose another shortest allocation path $P^{\prime}$. Based on the policy of the mechanism, if they are not chosen in the task allocation path, they are not rewarded by the owner $r$.
    
    Furthermore, as we proved in Theorem \ref{thm: impossibility 2}, for SP-TDGM, the reward for the agent who completes the task is reducing in $n$, which results in a reduction to all his ancestors along path $P$. For CP-TDGM, although the reward increases with $n$, agents can gain a higher reward by Sybil attacks. Thus, rewarded agents in CP-TDGM have no incentive to cooperate with agents out of the path $P$.
    
    Therefore, for agents $i\in P$, no other coalition can improve their utilities under TDGM.
\end{proof}

\section{Generalized Contribution Reward Mechanism}
\label{sec: mechanism 2}

Given the impossibility results in Section \ref{sec: impossibility}, a reward mechanism cannot achieve PO, BB, SP, and CP simultaneously. In Section \ref{sec: mechanism 1}, we introduce a family of geometric mechanisms which satisfy SP and CP, respectively. Indeed, SP-TDGM is 2-CP, and CP-TDGM is not SP.

Nevertheless, the ability to form a collusion group with a size larger than three is limited in a tree network (e.g., agent $i$ not only needs to make a deal with $p_{i}, c_{i}$, but also $p_{p_{i}}, c_{c_{i}}$, and so on.) Therefore, we propose a more practical mechanism that can satisfy Sybil proof and collusion proof to some extent.

\begin{mechanism}[Generalized Contribution Reward Mechanism]
    Given the agents' report file $\theta$ and the corresponding task allocation path $P$, the reward policy of GCRM is defined as
    \begin{equation}
    \label{eq: mechanism 2}
        x(i, n)=\frac{\alpha^{n-i}}{(1+\alpha)^{i}}\beta,
    \end{equation}
    for all $i\in P \setminus r$, $0<\alpha<1$ and $\beta=\Pi$.
\end{mechanism}

In GCRM, each agent is rewarded according to his contribution to the solution, $(\frac{1}{\alpha(1+\alpha)})^{i}$. If $0<\alpha \leq \frac{\sqrt{5}-1}{2}$, the task solver contributes the most and is also rewarded the most, while if $\frac{\sqrt{5}-1}{2} \leq \alpha <1$, the agent who is closest to the task owner contributes the most. In this sense, $\alpha$ may be considered as a parameter to control the contribution between information propagation and task solution. Moreover, the reward of each agent is normalized by a factor $\alpha^{n}$, which depends on the length of the path $P$.

The following theorem shows that GCRM satisfies the basic properties IC, PO, and BB without any restrictions. The proof is similar to that of Theorem \ref{thm: mechanism 1 property} and is provided in the full version.

\begin{theorem}
\label{thm: mechanism 2 property}
    GCRM satisfies IC, PO, and BB.
\end{theorem}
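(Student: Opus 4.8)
The plan is to follow the structure of the proof of Theorem~\ref{thm: mechanism 1 property}, checking the three properties in turn, since GCRM differs from TDGM only by replacing the weight $\alpha^{n-i}$ with $\alpha^{n-i}/(1+\alpha)^{i}$ and fixing $\beta=\Pi$. PO and IC should follow from elementary positivity and monotonicity observations, and I expect BB to be the only step with real content.

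For PO I would simply note that $x(i,n)=\alpha^{n-i}(1+\alpha)^{-i}\beta$ is a product of strictly positive factors whenever $0<\alpha<1$ and $\beta=\Pi>0$, so $x(i,n)>0$ for every $i\in P\setminus r$. For IC the key structural fact is that, for a fixed depth $i$, the reward $x(i,n)$ is strictly decreasing in the path length $n$ (because $0<\alpha<1$). Hence the designated solver maximizes his reward by answering immediately: solving keeps him at the end of the shortest path, where $n=i$ and his reward is $(1+\alpha)^{-i}\beta$, whereas declining only pushes the solution to a strictly deeper agent, lengthening the path and lowering his reward, or removes him from $P$ entirely (reward $0$). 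For an intermediate agent I would argue exactly as in Theorem~\ref{thm: mechanism 1 property}: because the task is allocated along the shortest path, withholding a child can never shorten the path through him; it can only lengthen it (lowering his reward) or drop him out of $P$ (reward $0$), so reporting the full child set $c_i$ is optimal. This is the part of IC I would verify most carefully, precisely because the reward is decreasing in $n$: one must confirm that no under-report can ever relocate the solver to a shallower position.

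For BB I would compute $\sum_{i=1}^{n}x(i,n)=\beta\sum_{i=1}^{n}\alpha^{n-i}(1+\alpha)^{-i}$ and show the sum is at most $1$. The cleanest route is to factor out $(1+\alpha)^{-n}$ and substitute $t:=\alpha(1+\alpha)$, turning the claim into
\begin{equation*}
\sum_{j=0}^{n-1} t^{j}\le (1+\alpha)^{n}.
\end{equation*}
I would prove this by induction on $n$: the base case $n=1$ reads $1\le 1+\alpha$, and the inductive step reduces to the single-term inequality $t^{n-1}=\alpha^{n-1}(1+\alpha)^{n-1}\le \alpha(1+\alpha)^{n-1}$, which holds because $\alpha^{n-2}\le 1$ for $n\ge 2$ with $0<\alpha<1$. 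Multiplying back by $\beta=\Pi$ then yields $\sum_{i\in P\setminus r}x(\theta_i)\le \Pi$.

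The main obstacle, and the reason BB is not just a transcription of the TDGM computation, is that here $\beta=\Pi$ is fixed, so there is no free multiplier to absorb slack, and the obvious one-shot bounds fail (for instance, replacing $(1+\alpha)^{-n}$ by its $n=1$ value already overshoots). The reformulation in terms of $\sum_{j}t^{j}$ together with the per-step inequality above is what makes the induction close; it also has the pleasant side effect of avoiding the degenerate case $t=1$, i.e.\ $\alpha=\frac{\sqrt{5}-1}{2}$, where a closed-form geometric-series formula would break down and require separate handling.
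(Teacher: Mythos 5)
Your proof is correct, and while your PO and IC arguments are essentially the paper's (positivity of each factor; the solver's reward $\alpha^{n'-n}(1+\alpha)^{-n}\beta$ after declining is strictly below $(1+\alpha)^{-n}\beta$ since $0<\alpha<1$, and the shortest-path allocation rule means withholding children can only lengthen the path through an intermediate agent or eject him from $P$), your BB argument takes a genuinely different and arguably cleaner route. The paper sums the geometric series in closed form, obtaining $\sum_{i=1}^{n}x(i,n)=\frac{1-\alpha^{n}(1+\alpha)^{n}}{(1+\alpha)^{n}-\alpha(1+\alpha)^{n+1}}\Pi$, and then shows the ratio is at most $1$ by a case split on the sign of $1-\alpha(1+\alpha)$, i.e.\ $0<\alpha\leq\frac{\sqrt{5}-1}{2}$ versus $\frac{\sqrt{5}-1}{2}<\alpha<1$, with the final inequalities in each case asserted to ``always hold'' rather than derived. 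Your reindexing to $\sum_{j=0}^{n-1}t^{j}\leq(1+\alpha)^{n}$ with $t=\alpha(1+\alpha)$ and induction via the per-step inequality $t^{n-1}\leq\alpha(1+\alpha)^{n-1}$ (equivalently $\alpha^{n-2}\leq 1$) buys three things: it eliminates the case split entirely, it makes every step explicit where the paper leaves verification to the reader, and it transparently handles the boundary $\alpha=\frac{\sqrt{5}-1}{2}$, where $t=1$ and the paper's closed-form expression degenerates to $0/0$ (both numerator and denominator vanish), so the paper's formula as written requires the limiting interpretation $\sum_{j=0}^{n-1}t^{j}=n\leq(1+\alpha)^{n}$ that your induction delivers automatically. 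Your instinct to scrutinize IC most carefully is also well placed: since $x(i,n)$ is decreasing in $n$ for fixed $i$, IC genuinely hinges on the fact that under-reporting $c_i'\subsetneq c_i$ can only delete candidate solution paths and hence never decreases the realized $n$, a point the paper's proof glosses over with ``similarly, for other agents.''
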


We then analyze the condition of $\alpha$ for GCRM to satisfy the property of $\rho$-split.

\begin{lemma}
    GCRM is $\rho$-split with $\alpha \in (0, \frac{\sqrt{5}-1}{2}]$, where $\rho=\alpha(1+\alpha)\in (0, 1]$.
\end{lemma}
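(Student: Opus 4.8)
The plan is to read the split factor directly off the ratio of consecutive rewards along the allocation path, and then determine for which $\alpha$ that factor is an admissible value of $\rho$. First I would form the quotient $x(i,n)/x(i+1,n)$ using the GCRM formula $x(i,n)=\frac{\alpha^{n-i}}{(1+\alpha)^{i}}\beta$ from Equation \ref{eq: mechanism 2}. Passing from index $i$ to $i+1$ raises the $\alpha$-exponent by one and lowers the $(1+\alpha)$-exponent by one, so both the $\beta$ factor and all dependence on $n$ cancel and the two geometric factors telescope, leaving
\begin{equation*}
    \frac{x(i,n)}{x(i+1,n)}=\alpha\,(1+\alpha).
\end{equation*}
The key observation is that this ratio is a constant: it does not depend on $i$, $n$, or $\beta$, only on $\alpha$.

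Next I would set $\rho=\alpha(1+\alpha)$ and check the $\rho$-split inequality. Since $x(i+1,n)>0$ by PO (Theorem \ref{thm: mechanism 2 property}), the displayed identity gives $x(i,n)=\alpha(1+\alpha)\,x(i+1,n)=\rho\,x(i+1,n)$ for every $i\in P\setminus r$. Thus the defining inequality $x(i,n)\geq \rho\,x(i+1,n)$ holds with equality, so GCRM is $\rho$-split, and in fact exactly (tightly) so for this value of $\rho$.

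It then remains to verify that $\rho=\alpha(1+\alpha)$ is a legitimate split factor, i.e. that it lies in the admissible range required by the definition of $\rho$-split. Positivity $\rho>0$ is automatic for $\alpha>0$. The upper bound $\rho\leq 1$ is the quadratic inequality $\alpha(1+\alpha)\leq 1$, equivalently $\alpha^{2}+\alpha-1\leq 0$, whose positive root is the golden-ratio conjugate $\frac{\sqrt{5}-1}{2}$. Intersecting the solution set with the standing constraint $0<\alpha<1$ yields precisely $\alpha\in\bigl(0,\tfrac{\sqrt{5}-1}{2}\bigr]$, with the boundary case $\alpha=\frac{\sqrt{5}-1}{2}$ giving $\rho=1$ and every interior $\alpha$ giving $\rho\in(0,1)$, matching the stated range $\rho\in(0,1]$.

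There is no substantive obstacle here: the proof is a one-line telescoping computation followed by solving a quadratic. The only point requiring care is the endpoint bookkeeping, namely translating the constraint $\rho\leq 1$ into the closed upper bound $\alpha\leq\frac{\sqrt{5}-1}{2}$ and confirming that equality $\rho=1$ is attained exactly at that golden-ratio endpoint, consistent with the half-open interval in the statement.
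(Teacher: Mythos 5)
Your proof is correct and follows essentially the same route as the paper's: compute the constant ratio $x(i,n)/x(i+1,n)=\alpha(1+\alpha)$ from Eq.~\ref{eq: mechanism 2}, identify it with $\rho$, and solve $\alpha^{2}+\alpha-1\leq 0$ to obtain $\alpha\in(0,\frac{\sqrt{5}-1}{2}]$ and $\rho\in(0,1]$, with your endpoint analysis being slightly more explicit than the paper's ``by simple calculation.'' One small verbal slip: passing from $i$ to $i+1$ \emph{lowers} the $\alpha$-exponent (from $n-i$ to $n-i-1$) and \emph{raises} the $(1+\alpha)$-exponent, not the reverse, though the resulting quotient $\alpha(1+\alpha)$ you write down is correct.
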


\begin{proof}
    According to the reward function of GCRM (Eq. \ref{eq: mechanism 2}), we have
    \begin{equation*}
        \begin{aligned}
            \frac{x(i, n)}{x(i+1, n)}&=\frac{\frac{\alpha^{n-1}}{(1+\alpha)^{i}}}{\frac{\alpha^{n-i-1}}{(1+\alpha)^{i+1}}}
            &=(1+\alpha)\alpha
            &=\rho
        \end{aligned}
    \end{equation*}
    Since $0<\alpha\leq \frac{\sqrt{5}-1}{2}$, by simple calculation, we derive that $0<\rho \leq 1$, which follows the definition.
\end{proof}
    
So far, we have shown that GCRM is IC, PO, BB, and $\rho$-split under a certain condition. Following then, we study whether the mechanism is Sybil-proof and collusion-proof.

\begin{lemma}
\label{lemma: mechanism 2 sp}
    GCRM is $\lambda^{*}$-Sybil proof, where integer $\lambda^{*}>1$. Furthermore,
    \begin{itemize}
        \item $\lambda^{*}$ increases with $\alpha$,

        \item agents maximize their reward by $\lceil \lambda^{\prime} \rfloor$ Sybil attacks, where $\lambda^{\prime}=\frac{\log(\frac{-\log(1+\alpha)}{\alpha(1+\alpha)(\log(\alpha))})}{\log(\alpha)\log(1+\alpha)}$, ($\lceil \cdot \rfloor$ denotes for the nearest integer function.)
        
        \item the reward more than $\lceil \lambda^{\prime} \rfloor$ Sybil attacks is at most twice as the original one.
    \end{itemize}
\end{lemma}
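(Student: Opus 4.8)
The plan is to reduce all three bullets to the behaviour of a single ratio. Writing the reward as $x(i,n)=\alpha^{n-i}(1+\alpha)^{-i}\beta$, an agent sitting at depth $i$ in a path of length $n$ who injects $\lambda$ fake identities occupies the positions $i,\dots,i+\lambda$ of a path of length $n+\lambda$, so by Definition \ref{def: sp} the mechanism is $\lambda$-SP exactly when $g(\lambda)\le 1$, where
\[
g(\lambda):=\frac{\sum_{k=0}^{\lambda}x(i+k,n+\lambda)}{x(i,n)}=\sum_{k=0}^{\lambda}\frac{\alpha^{\lambda-k}}{(1+\alpha)^{k}}=\frac{1}{(1+\alpha)^{\lambda}}\sum_{j=0}^{\lambda}s^{j},\qquad s:=\alpha(1+\alpha).
\]
The cancellation of $i$ and $n$ mirrors the computation in the proof of Theorem \ref{thm: mechanism 1 sp & cp}. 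First I would record two immediate facts: $g(0)=1$ and $g(1)=1+\alpha^{2}/(1+\alpha)>1$, so the mechanism is never $1$-SP and hence any threshold must satisfy $\lambda^{*}>1$.

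Next I would extract the recursion
\[
g(\lambda)=\frac{g(\lambda-1)}{1+\alpha}+\alpha^{\lambda},\qquad g(0)=1,
\]
obtained from $\sum_{j=0}^{\lambda}s^{j}=\sum_{j=0}^{\lambda-1}s^{j}+s^{\lambda}$ together with $s/(1+\alpha)=\alpha$. This recursion does most of the work. From $g(\lambda)-g(\lambda-1)=\alpha\bigl(\alpha^{\lambda-1}-g(\lambda-1)/(1+\alpha)\bigr)$ I would show that once $g$ starts to decrease it keeps decreasing: if $g(\lambda)\le g(\lambda-1)$ then $g(\lambda)\ge(1+\alpha)\alpha^{\lambda-1}\ge(1+\alpha)\alpha^{\lambda}$, which forces $g(\lambda+1)\le g(\lambda)$. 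Hence $g$ is unimodal and there is a well-defined smallest integer $\lambda^{*}$ on the decreasing branch with $g(\lambda^{*})\le 1$, establishing the ``$\lambda^{*}$-SP'' claim. The factor-two bound (third bullet) then falls out of the same recursion by induction: $g(0)=1\le2$, and if $g(\lambda-1)\le2$ then $g(\lambda)\le \tfrac{2}{1+\alpha}+\alpha^{\lambda}\le\tfrac{2}{1+\alpha}+\tfrac{2\alpha}{1+\alpha}=2$, using $\alpha^{\lambda}\le\alpha<2\alpha/(1+\alpha)$. Thus $g(\lambda)\le 2$ for every $\lambda$, so no number of Sybils (the optimum or beyond) ever more than doubles the honest reward.

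For the optimal number of attacks (second bullet) I would pass to real $\lambda$ and use the closed form $g(\lambda)=\bigl(s\alpha^{\lambda}-(1+\alpha)^{-\lambda}\bigr)/(s-1)$, which follows from summing the geometric series and $s/(1+\alpha)=\alpha$. Setting $g'(\lambda)=0$ gives $s\alpha^{\lambda}\log\alpha+(1+\alpha)^{-\lambda}\log(1+\alpha)=0$, i.e. $s^{-\lambda}=\frac{-s\log\alpha}{\log(1+\alpha)}$, and taking logarithms yields the stated maximiser $\lambda'=\log\!\bigl(\tfrac{-\log(1+\alpha)}{\alpha(1+\alpha)\log\alpha}\bigr)\big/\log\!\bigl(\alpha(1+\alpha)\bigr)$; by unimodality the best integral number of Sybils is the nearest integer $\lceil\lambda'\rfloor$. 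One should track the sign bookkeeping here, namely $\log\alpha<0$ while $\log(1+\alpha)>0$, to confirm the argument of the outer logarithm is positive and that $\lambda'>0$.

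The hard part is the monotonicity of the threshold in $\alpha$ (first bullet), because $\lambda^{*}$ is an integer crossing point rather than the clean maximiser $\lambda'$. My plan is a comparative-statics argument: treat the decreasing-branch crossing $\bar\lambda(\alpha)$ defined by $g(\bar\lambda;\alpha)=1$ and differentiate implicitly, so that $\mathrm{sign}\,\bar\lambda'(\alpha)=\mathrm{sign}\,\partial_{\alpha}g$, since the denominator $\partial_{\lambda}g$ is negative on that branch. It then suffices to show $\partial_{\alpha}g>0$ at points where $g=1$; I would compute $\partial_{\alpha}\log g=-\lambda/(1+\alpha)+(1+2\alpha)\,\sum_{j\ge1}js^{j-1}/\sum_{j\ge0}s^{j}$ and substitute the defining relation $\sum_{j=0}^{\lambda}s^{j}=(1+\alpha)^{\lambda}$ to certify positivity. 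Establishing this inequality cleanly is the one genuinely delicate estimate; once it holds, $\bar\lambda$ is increasing and $\lambda^{*}=\lceil\bar\lambda\rceil$ is non-decreasing in $\alpha$, completing the proof.
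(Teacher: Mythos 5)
Your reduction to the single ratio $g(\lambda)$ is exactly the paper's starting point (the paper calls it $f(\alpha,\lambda)$ and derives the same closed form), and your $\lambda=1$ computation $g(1)=1+\alpha^{2}/(1+\alpha)=\alpha+\frac{1}{1+\alpha}>1$ reproduces the paper's proof that GCRM is never $1$-SP. From there you diverge in a worthwhile way. The paper handles the remaining claims by calculus: it differentiates $f$ in $\lambda$, checks the second derivative is negative at $\lambda'$, and proves the factor-two bound only asymptotically, showing $f(\alpha,\lambda')\to 2$ as $\alpha\to 1$ via the geometric series $\sum_{j}2^{-j}$. Your recursion $g(\lambda)=g(\lambda-1)/(1+\alpha)+\alpha^{\lambda}$ replaces both steps with elementary arguments, and they check out: the once-decreasing-always-decreasing step is sound (from $g(\lambda)\le g(\lambda-1)$ one gets $g(\lambda)\ge(1+\alpha)\alpha^{\lambda-1}\ge(1+\alpha)\alpha^{\lambda}$, forcing $g(\lambda+1)\le g(\lambda)$), and your induction gives the uniform bound $g(\lambda)\le 2$ for \emph{every} $\lambda$ and $\alpha$, which is strictly stronger and cleaner than the paper's limiting argument for the third bullet (you should add the one-line observation that $g(\lambda)\to 0$, immediate from your closed form, so the crossing $\lambda^{*}$ actually exists). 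On the second bullet, note that your derived maximiser has denominator $\log(\alpha(1+\alpha))=\log\alpha+\log(1+\alpha)$, not the printed $\log(\alpha)\log(1+\alpha)$: your version is the correct one, since it is what the paper's own appendix computation $s^{\lambda'+1}=-\log(1+\alpha)/\log\alpha$ yields and it matches the sum-form denominator in Lemma \ref{lemma: mechanism 2 maximized agents}; the lemma statement contains a typo that you silently corrected. (You share two harmless imprecisions with the paper: the integer optimum of a unimodal function is one of $\lfloor\lambda'\rfloor,\lceil\lambda'\rceil$ rather than necessarily the nearest integer, and the formulas degenerate at $\alpha=\frac{\sqrt{5}-1}{2}$ where $s=1$.)

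The one genuine incompleteness is the first bullet. Your comparative-statics plan --- implicitly differentiate the crossing $g(\bar\lambda;\alpha)=1$ on the decreasing branch, where $\partial_{\lambda}g<0$, and reduce to showing $\partial_{\alpha}g>0$ there --- has the right structure, but you say yourself that you have not established the key inequality, and substituting $\sum_{j=0}^{\lambda}s^{j}=(1+\alpha)^{\lambda}$ into $\partial_{\alpha}\log g=-\lambda/(1+\alpha)+(1+2\alpha)\sum_{j\ge 1}js^{j-1}/\sum_{j\ge 0}s^{j}$ does not make positivity obvious; as written, monotonicity of $\lambda^{*}$ in $\alpha$ remains a plan rather than a proof. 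To be fair, the paper is weaker still on this point: it merely rewrites $f(\alpha,\lambda^{*})=1$ and asserts ``as $\alpha$ increases, $\lambda^{*}$ also increases'' with no argument whatsoever. So your proposal is at least as complete as the published proof and more rigorous on the second and third bullets, but to close the first bullet you must actually certify $\partial_{\alpha}g>0$ at the crossing (numerically plausible: $\lambda^{*}=2,3,5$ at $\alpha=0.3,0.5,0.8$).
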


\begin{proof}
    We start by proving the mechanism is $\lambda^{*}$-SP, where $\lambda^{*}>1$. According to the reward function of GCRM (Eq.\ref{eq: mechanism 2}), we have
    \begin{equation*}
        \begin{aligned}
            \sum_{k=0}^{\lambda}x(i+k, n+\lambda)&=\sum_{k=0}^{\lambda} \frac{\alpha^{n+\lambda -i-k}}{(1+\alpha)^{i+k}}\\
            &= \frac{\alpha^{n+\lambda-i}}{(1+\alpha)^{i}}\sum_{k=0}^{\lambda}\frac{1}{(\alpha(1+\alpha))^{k}}\\
            &= x(i, n)\alpha^{\lambda}\sum_{k=0}^{\lambda}\frac{1}{(\alpha(1+\alpha))^{k}}\\
            &= x(i, n)\frac{1-\alpha^{\lambda+1}(1+\alpha)^{\lambda+1}}{(1+\alpha)^{\lambda}-\alpha(1+\alpha)^{\lambda+1}}
        \end{aligned}
    \end{equation*}
    
    For convenience, hereafter, we denote $f(\alpha, \lambda)=\frac{1-\alpha^{\lambda+1}(1+\alpha)^{\lambda+1}}{(1+\alpha)^{\lambda}-\alpha(1+\alpha)^{\lambda+1}}$.
    
    By substituting $\lambda=1$ into $f(\alpha, \lambda)$, we derive that 
    \begin{equation*}
        \begin{aligned}
            f(\alpha,1)&=\frac{1-\alpha^{2}(1+\alpha)^{2}}{(1+\alpha)^{1}-\alpha(1+\alpha)^{2}}\\
            &= \frac{(1-\alpha(1+\alpha))(1+\alpha(1+\alpha))}{(1+\alpha)(1-\alpha(1+\alpha))}\\
            &= \frac{1}{1+\alpha}+\alpha
        \end{aligned}
    \end{equation*}
    Since $\alpha \in (0, 1)$, $\frac{1}{1+\alpha}+\alpha\in (1, \frac{3}{2})$. As a result, $x(i, n) < x(i, n+1)+x(i+1, n+1)$ is always true for GCRM, which is never a 1-SP mechanism.
    
    Due to space limitations, the rest of the proof is provided in full version.

\end{proof}

\begin{lemma}
\label{lemma: mechanism 2 sp and cp}
    GCRM is $\lambda^{*}$-Sybil-proof and $\lambda^{*}$-collusion proof.
\end{lemma}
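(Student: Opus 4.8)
The plan is to reduce both halves of the claim to the single scalar $f(\alpha,\lambda)$ isolated in the proof of Lemma~\ref{lemma: mechanism 2 sp}, where we found
\begin{equation*}
\sum_{k=0}^{\lambda}x(i+k,n+\lambda)=x(i,n)\,f(\alpha,\lambda),\qquad f(\alpha,\lambda)=\frac{1-\alpha^{\lambda+1}(1+\alpha)^{\lambda+1}}{(1+\alpha)^{\lambda}-\alpha(1+\alpha)^{\lambda+1}}.
\end{equation*}
Since $x(i,n)>0$ by PO (Theorem~\ref{thm: mechanism 2 property}), Definition~\ref{def: sp} says GCRM is $\lambda$-SP exactly when $f(\alpha,\lambda)\le 1$, and Definition~\ref{def: cp} says it is $(\gamma{+}1)$-CP exactly when $f(\alpha,\gamma)\ge 1$. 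Hence the whole lemma is a statement about the position of the curve $\lambda\mapsto f(\alpha,\lambda)$ relative to the level $1$: the SP half is already the content of Lemma~\ref{lemma: mechanism 2 sp}, and the new work here is to show that the \emph{same} threshold governs CP.

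First I would record the two endpoints already available: $f(\alpha,1)=\tfrac{1}{1+\alpha}+\alpha\in(1,\tfrac{3}{2})$, and $f(\alpha,\lambda)\to 0$ as $\lambda\to\infty$ (the denominator grows like $(1+\alpha)^{\lambda}$ while the numerator stays bounded; the degenerate case $\alpha(1+\alpha)=1$ is handled separately, where $f(\alpha,\lambda)=(\lambda+1)\alpha^{\lambda}\to 0$). Next I would treat $\lambda$ as a continuous variable and show that $f(\alpha,\cdot)$ is unimodal on $[1,\infty)$: computing $\partial_\lambda\log f$ shows it changes sign precisely where the strictly monotone expression $\bigl(\alpha(1+\alpha)\bigr)^{\lambda+1}(-\log\alpha)-\log(1+\alpha)$ vanishes, hence at most once, so $f$ has at most one interior critical point, namely the maximizer $\lambda'$ of Lemma~\ref{lemma: mechanism 2 sp}. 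Combined with $f(\alpha,1)>1>\lim_{\lambda\to\infty}f(\alpha,\lambda)$, unimodality forces $f(\alpha,\cdot)$ to cross the level $1$ exactly once on $[1,\infty)$, on its decreasing branch.

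With single-crossing in hand I would define $\lambda^{*}$ as the least integer with $f(\alpha,\lambda^{*})\le 1$. Minimality together with single-crossing gives $f(\alpha,\lambda)\le 1$ for every $\lambda\ge\lambda^{*}$ and $f(\alpha,\gamma)\ge 1$ for every integer $\gamma\le\lambda^{*}-1$. The first statement is exactly $\lambda^{*}$-Sybil-proofness (no Sybil manipulation of size at least $\lambda^{*}$ is profitable); the second, applied with $\gamma=\lambda^{*}-1$ in Definition~\ref{def: cp}, yields $\lambda^{*}$-collusion-proofness (no coalition of size up to $\lambda^{*}$ can gain). Thus the one threshold $\lambda^{*}$ separates the two regimes: below it collusion is defeated, above it Sybil attacks are.

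The main obstacle is the unimodality step, since everything downstream is bookkeeping around a single crossing. The cleanest route is the sign analysis of $\partial_\lambda\log f$ above, which avoids differentiating the full quotient; I would then confirm that the integer maximizer reported earlier, $\lceil\lambda'\rfloor$, is consistent with this critical point and, in particular, that when $\lambda'\le 1$ the curve is simply decreasing on the whole integer domain so that the single-crossing conclusion is unaffected. Two boundary checks remain---the rounding from the real maximizer $\lambda'$ to an integer, and the degenerate parameter $\alpha=\tfrac{\sqrt5-1}{2}$ where $\alpha(1+\alpha)=1$---but both are local verifications rather than new ideas.
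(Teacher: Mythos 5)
Your proposal is correct and takes essentially the same route as the paper: both reduce the claim to the position of $f(\alpha,\lambda)=\frac{1-\alpha^{\lambda+1}(1+\alpha)^{\lambda+1}}{(1+\alpha)^{\lambda}-\alpha(1+\alpha)^{\lambda+1}}$ relative to the level $1$, invoke the unimodality from Lemma~\ref{lemma: mechanism 2 sp} ($f(\alpha,1)>1$, maximum at $\lambda^{\prime}$, decreasing thereafter), take $\lambda^{*}$ as the smallest integer with $f(\alpha,\lambda^{*})\leq 1$, and apply Definition~\ref{def: cp} at $\gamma=\lambda^{*}-1$ to get $\lambda^{*}$-CP. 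If anything, your write-up is slightly more careful than the paper's (the explicit $\partial_{\lambda}\log f$ single-crossing argument and the degenerate case $\alpha(1+\alpha)=1$), but it introduces no genuinely different idea.
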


\begin{proof}
    As we proved in Lemma \ref{lemma: mechanism 2 sp}, $f(\alpha, \lambda)$ decreases with $\lambda$ if $\lambda>\lambda^{\prime}$. In addition, $\lambda^{*}$ ($\lambda^{*}>\lambda^{\prime}$) is the \textbf{smallest} integer that $f(\alpha, \lambda^{*})\leq 1$. Hence, $\lambda^{*}-1$ is the \textbf{largest} integer that $f(\alpha, \lambda^{*}-1)\geq 1$. Then, we have
    \begin{equation*}
        \begin{aligned}
            \sum_{k=0}^{\lambda^{*}-1}x(i+k, n+\lambda^{*}-1)=f(\alpha, \lambda^{*}-1)x(i, n)\geq x(i, n).
        \end{aligned}
    \end{equation*}
    By Definition \ref{def: cp}, the mechanism is $\lambda^{*}$-CP.
\end{proof}

Intuitively, under GCRM, agents always benefit from at least 1 Sybil attack. Moreover, as $\alpha$ increases, the maximum number of profitable Sybil attacks and the minimum size of the profitable group collusion increase. In contrast to CP-TDGM, under GCRM, there exists an optimal number ($\lambda^{\prime}$) of Sybil attacks to maximize the reward of each agent, and the new reward is at most twice the reward without Sybil attacks.

Despite GCRM can neither prevent Sybil attacks nor collusion, the total reward is upper bounded and decreases with a sufficiently large number of agents. Hence, the total reward never exceeds the budget of the questioner.

\begin{lemma}
\label{lemma: mechanism 2 maximized agents}
    The total reward of GCRM maximized with $\lceil n^{\prime} \rfloor$ agents, where $n^{\prime}=\frac{\log(\frac{-\log(1+\alpha)}{\log(\alpha)})}{\log(\alpha)+\log(1+\alpha)}$.
\end{lemma}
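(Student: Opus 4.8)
The plan is to compute the total reward as an explicit function of the path length $n$, relax $n$ to a real variable, and locate the unique stationary point. First I would write the total reward as $R(n)=\sum_{i=1}^{n}x(i,n)=\beta\sum_{i=1}^{n}\frac{\alpha^{n-i}}{(1+\alpha)^{i}}$ and factor out $\alpha^{n}$ to expose a geometric sum with ratio $\frac{1}{\alpha(1+\alpha)}$. Summing and using the identity $\alpha^{n}\big(\alpha(1+\alpha)\big)^{-n}=(1+\alpha)^{-n}$ collapses the expression to the closed form
\begin{equation*}
R(n)=\frac{\beta}{\alpha^{2}+\alpha-1}\Big(\alpha^{n}-(1+\alpha)^{-n}\Big),
\end{equation*}
which is the key simplification: the two competing decays $\alpha^{n}$ and $(1+\alpha)^{-n}$ are exactly what create an interior maximum, with $R(0)=0$ and $R(n)\to0$ as $n\to\infty$.

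Next I would treat $n$ as continuous and differentiate. Because the prefactor $\frac{\beta}{\alpha^{2}+\alpha-1}$ is a nonzero constant, the first-order condition $R'(n)=0$ reduces to $\alpha^{n}\ln\alpha+(1+\alpha)^{-n}\ln(1+\alpha)=0$. Rearranging gives $\big(\alpha(1+\alpha)\big)^{n}=\frac{-\ln(1+\alpha)}{\ln\alpha}$, where both sides are positive since $\ln\alpha<0<\ln(1+\alpha)$; taking logarithms then yields exactly $n'=\frac{\log\!\big(\frac{-\log(1+\alpha)}{\log\alpha}\big)}{\log\alpha+\log(1+\alpha)}$. Since $\big(\alpha(1+\alpha)\big)^{n}$ is strictly monotone in $n$, this stationary point is unique, and combined with the boundary behavior ($R(0)=0$, $R\to0$, $R>0$ for $n\ge1$) it is the unique global maximizer of the relaxation.

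Finally, because the reward is defined only on integers, I would pass from $n'$ to the integer optimizer. Unimodality forces the integer argmax to be one of $\lfloor n'\rfloor$ or $\lceil n'\rceil$, and the statement records the nearest one, $\lceil n'\rfloor$. The main obstacle is honestly this gap between the continuous and discrete optima: nearest-integer rounding is the natural approximation but need not coincide with the exact discrete argmax, so the cleanest rigorous route is a sign-change analysis of the discrete difference $R(n+1)-R(n)$ (which has a single sign change) rather than rounding $n'$ directly. A secondary obstacle is the degenerate case $\alpha(1+\alpha)=1$ (i.e.\ $\alpha=\frac{\sqrt5-1}{2}$), where $\alpha^{2}+\alpha-1=0$ turns both the closed form and $n'$ into a removable $0/0$ singularity; there the geometric sum has ratio one, giving $R(n)=\beta\,n\,\alpha^{n}$ with maximizer $n=-1/\ln\alpha$, which I would show equals the limit of $n'$ as $\alpha(1+\alpha)\to1$.
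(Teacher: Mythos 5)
Your proposal is correct and follows the same overall strategy as the paper's proof: take the closed form of the total reward (the paper reuses $\sum_{i=1}^{n}x(i,n)=\frac{1-\alpha^{n}(1+\alpha)^{n}}{(1+\alpha)^{n}-\alpha(1+\alpha)^{n+1}}\Pi$ from the proof of Theorem~\ref{thm: mechanism 2 property}), relax $n$ to a real variable, and solve the first-order condition, which yields exactly the stated $n^{\prime}$. Where you diverge is in execution, and your version is tighter in three respects. First, your simplification to $\frac{\beta}{\alpha^{2}+\alpha-1}\bigl(\alpha^{n}-(1+\alpha)^{-n}\bigr)$ makes the calculus transparent, and replacing the paper's second-order check (which is merely asserted --- the substitution of $n^{\prime}$ into $\frac{\partial^{2}}{\partial n^{2}}\sum_{i}x(i,n)$ is never carried out) by uniqueness of the stationary point via strict monotonicity of $(\alpha(1+\alpha))^{n}$, combined with $R(0)=0$, $R>0$ for $n\geq 1$, and $R\to 0$ as $n\to\infty$, gives a fully verifiable global-optimality argument. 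Second, you correctly flag the degenerate case $\alpha=\frac{\sqrt{5}-1}{2}$, where $\alpha^{2}+\alpha-1=0$ makes both the paper's closed form and your factored form singular; the paper ignores this, and your resolution checks out: with $\alpha(1+\alpha)=1$ one has $\ln(1+\alpha)=-\ln\alpha$, the sum collapses to $R(n)=\beta n\alpha^{n}$ with real maximizer $-1/\ln\alpha$, and a first-order expansion of $n^{\prime}$ in $\epsilon=\ln(\alpha(1+\alpha))$ confirms this is the limit of $n^{\prime}$. Third, your caveat about the continuous-to-discrete step is a genuine observation about the lemma itself: unimodality only guarantees the integer argmax lies in $\{\lfloor n^{\prime}\rfloor,\lceil n^{\prime}\rceil\}$, not that it equals the nearest integer $\lceil n^{\prime}\rfloor$ as stated, and your proposed sign-change analysis of the discrete difference $R(n+1)-R(n)$ is the right way to pin down the exact discrete maximizer --- a gap the paper's proof (like its statement) silently elides.
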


As a result, agents have to trade off between manipulating multiple identities and cooperating with other agents. For example, creating too many Sybil attacks reduces the reward. In the meanwhile, forming a large coalition group is also impractical.

Similarly, GCRM is also a core-selecting mechanism, which ensures the outcome of the mechanism is stable and fair. The proof is the same as in Lemma \ref{thm: mechanism 1 core}.

\begin{lemma}
\label{lemma: mechanism 2 core}
    GCRM is a core-selecting mechanism.
\end{lemma}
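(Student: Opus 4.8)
The plan is to reuse the template of Lemma \ref{thm: mechanism 1 core} almost verbatim, since GCRM shares the two structural features that drove that argument: the task is allocated along the \emph{shortest} root-to-solver path, and an agent can enter the tree only through a referral by his parent. First I would fix the allocation path $P$ that GCRM produces on the truthful reports and ask whether any coalition $S$ can deviate so that every member of $S$ strictly improves. I would split the candidate deviations into two kinds: those in which $S$ recruits agents lying outside $P$, and those internal to $P$.

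For the first kind I would argue exactly as in Lemma \ref{thm: mechanism 1 core}: bringing in an out-of-path agent can only lengthen the realized path, so the owner is free to re-select a (weakly shorter) shortest path $P'$. If the colluding agents are displaced from $P'$ they are paid nothing, and since every in-path reward in \eqref{eq: mechanism 2} is strictly positive, dropping out of the paying path is never an improvement. Hence no coalition that relies on out-of-path members can block the allocation.

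For the second kind I would lean on the facts already established for GCRM. The reward vector on $P$ is completely pinned down by the path length $n$ and the position $i$ through $x(i,n)=\alpha^{n-i}(1+\alpha)^{-i}\beta$, so any internal restructuring is precisely a Sybil split or a collusive merge. Here I would invoke Lemma \ref{lemma: mechanism 2 sp and cp} to characterize exactly when such a move is profitable, together with Lemma \ref{lemma: mechanism 2 maximized agents}, which caps the total reward and shows it eventually decreases in the number of agents; this bounds how much any internal coalition can extract and rules out a deviation that strictly benefits \emph{all} of its members at once.

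The hard part will be reconciling the core claim with the fact that GCRM is only $\lambda^{*}$-Sybil-proof and $\lambda^{*}$-collusion-proof (Definitions \ref{def: sp}, \ref{def: cp}): unlike SP-TDGM or CP-TDGM, here \emph{both} a bounded Sybil split and a bounded merge can be individually profitable, so I cannot lean on full SP or full CP in either direction. The delicate step is therefore to make precise which deviations the core notion of the full version actually admits --- I expect it restricts blocking coalitions to those that must enlist out-of-path agents (or that change the chosen shortest path) --- and then to check that within that restricted class the shortest-path displacement argument still forces at least one coalition member to lose, so that no blocking coalition exists and GCRM is core-selecting.
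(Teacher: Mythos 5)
Your proposal matches the paper's proof, which for this lemma is literally ``the same as in Lemma~\ref{thm: mechanism 1 core}'': the shortest-path reallocation argument showing that any coalition enlisting agents outside $P$ lengthens the realized path and risks displacement to a zero reward (while every in-path reward of GCRM is strictly positive), so no such coalition can block. Your closing observation---that internal Sybil splits and bounded merges must lie outside the admissible blocking class because GCRM is only $\lambda^{*}$-SP and $\lambda^{*}$-CP---is precisely the implicit reading the paper adopts, so your treatment is, if anything, more careful than the original sketch.
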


\section{Empirical Evaluations}
\label{sec: empirical}

In this section, we empirically evaluate the performance of GCRM and compare it with TDGM. Note that we use DGM \cite{zhang2021sybil} and $\delta$-GEOM \cite{nath2012mechanism} to represent SP-TDGM and CP-TDGM, respectively.

However, $\alpha$ has a different meaning for DGM, $\delta$-GEOM, and GCRM. To keep consistency, we restrict the parameter $\rho$ in order to secure all three mechanisms are $\rho$-split  ($\alpha^{DGM}=\frac{\rho}{1+\rho}$, $\delta=\rho$ and $\alpha^{GCRM}=\frac{\sqrt{1+4\rho}-1}{2}$).

We begin with analyzing the performance of $\delta$-GEOM and GCRM on Sybil-proof, which is graphically shown in Figure \ref{fig: sybil plot}.

\begin{figure}[htp]
    \centering
    \includegraphics[width=0.4\textwidth]{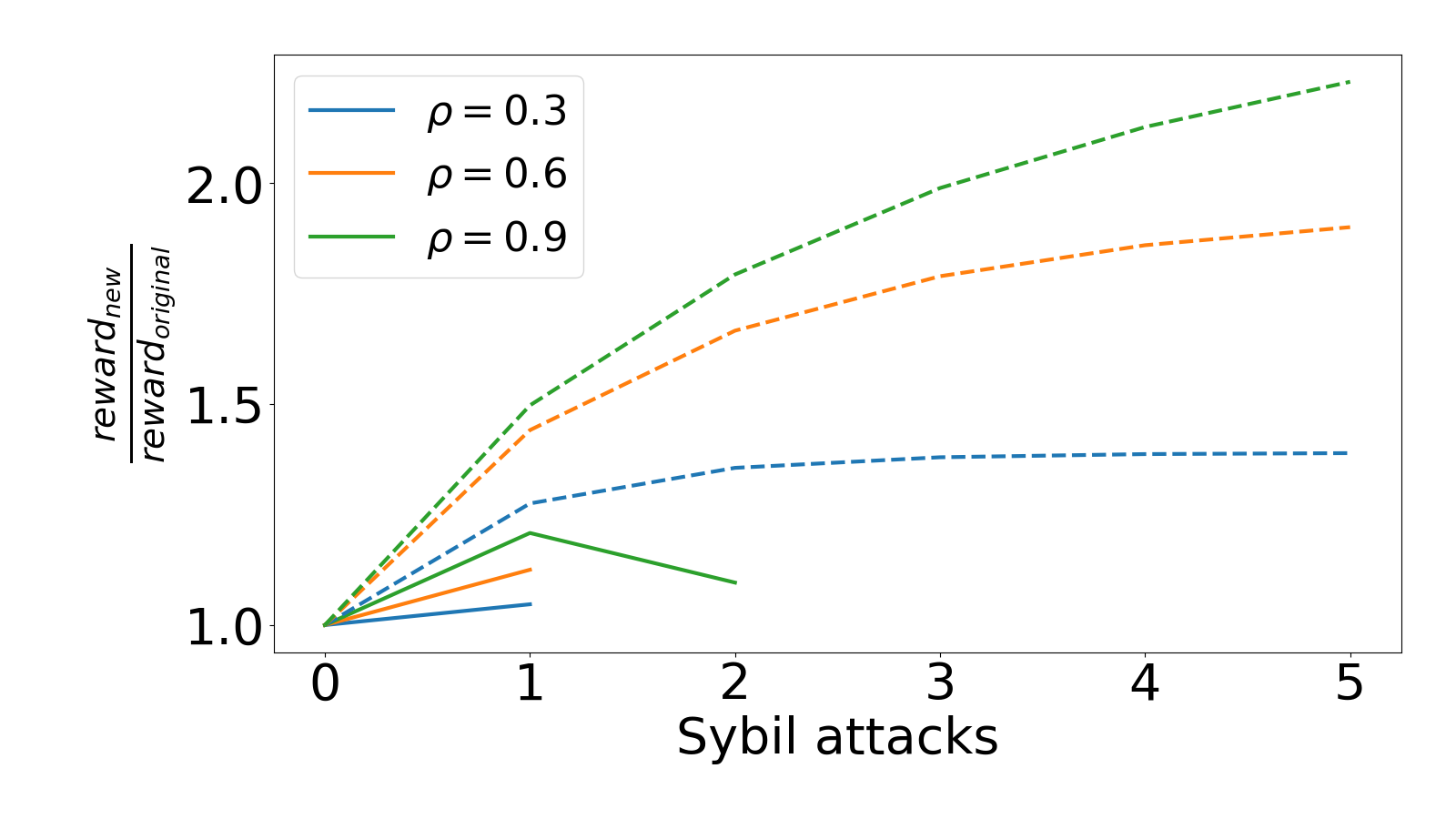}
    \caption{The ratio of new reward to original one after several Sybil attacks. Dashed lines represent $\delta$-GEOM. Solid lines represent GCRM.}
    \label{fig: sybil plot}
\end{figure}

As we can see in Figure \ref{fig: sybil plot}, agents always benefit from manipulating multiple identities under $\delta$-GEOM. In addition, the more Sybil attacks they create, the more reward they gain. However, under GCRM, there exists a $\lambda^{\prime}$ such that agents maximize their reward by $\lceil \lambda^{\prime} \rfloor$ Sybil attacks, and the number of profitable Sybil attacks is upper bounded. Furthermore, agents under GCRM receive less reward from Sybil attacks than under $\delta$-GEOM.

Next, we evaluate the performance of DGM and GCRM on collusion-proof in Figure \ref{fig: collusion plot}. As shown in Figure \ref{fig: collusion plot}, GCRM performs better than DGM on collusion-proof. Regardless of the value of $\rho$, agents under GCRM receive less reward for forming a coalition group than agents under DGM. Furthermore, both Figures \ref{fig: sybil plot} \& \ref{fig: collusion plot} supplement the results of Lemma \ref{lemma: mechanism 2 sp}, $\alpha$ ($\alpha$ is proportional to $\rho$) increases the minimum collusion size requirement to improve the reward.

\begin{figure}[htp]
    \centering
    \includegraphics[width=0.4\textwidth]{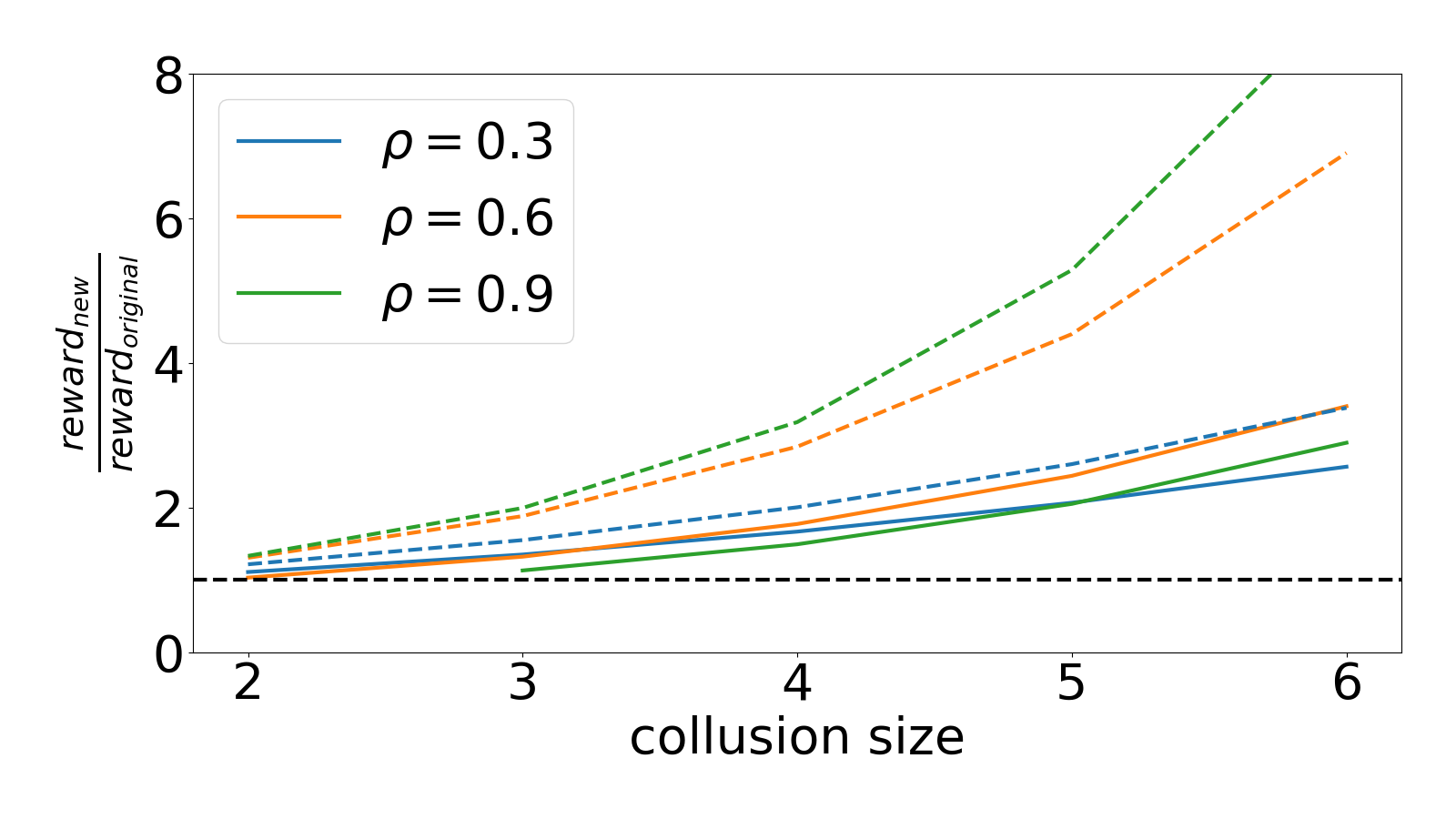}
    \caption{The ratio of new reward to the original one with different collusion size. Dashed lines represent DGM. Solid lines represent GCRM. Black dashed line represents $reward_{new}=reward_{original}$.}
    \label{fig: collusion plot}
\end{figure}

\begin{figure}[htp]
    \centering
    \includegraphics[width=0.4\textwidth]{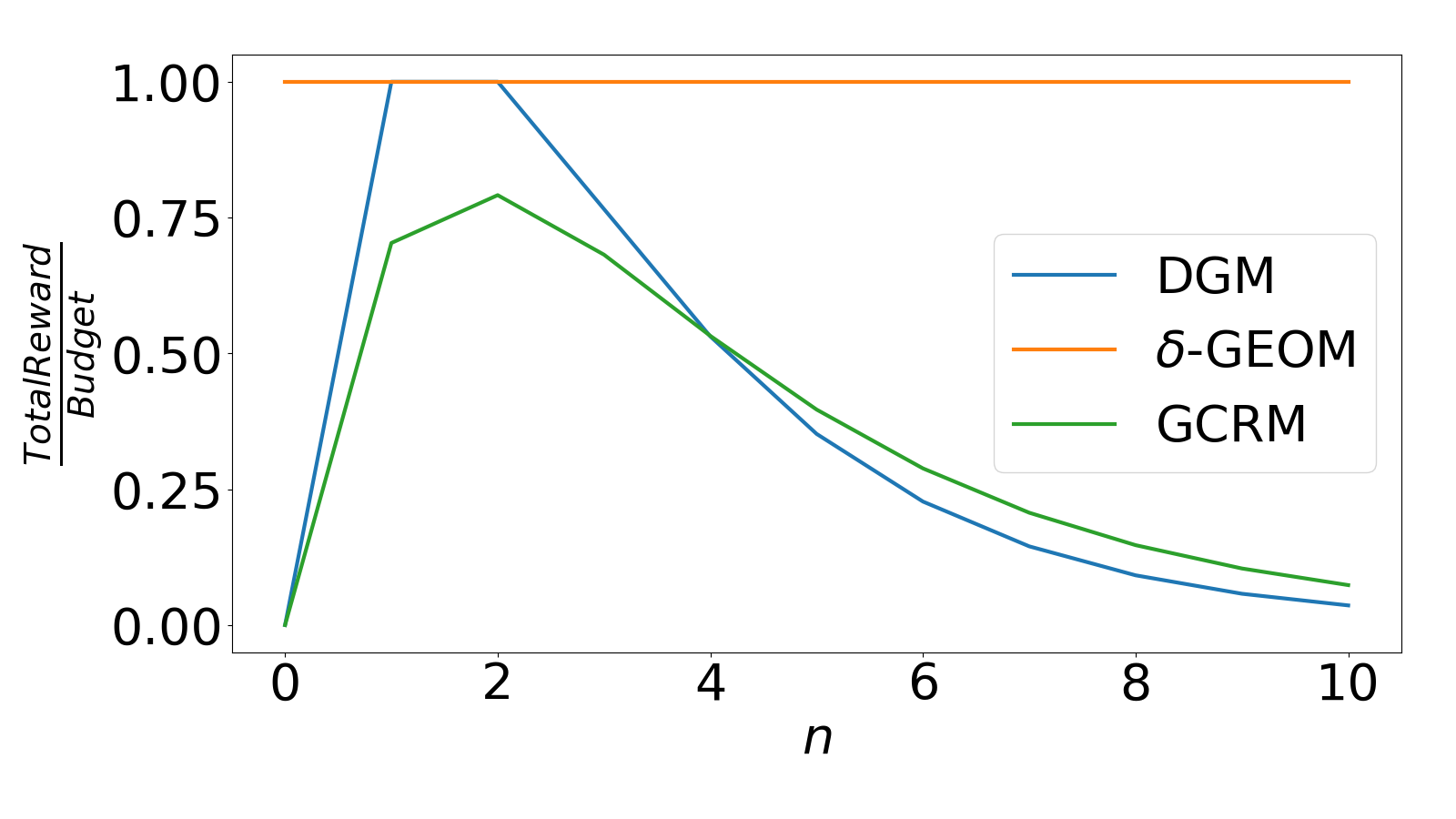}
    \caption{The ratio of total reward to budget, with $\rho=0.6$.}
    \label{fig: reward plot}
\end{figure}

As we proved in Theorems \ref{thm: mechanism 1 property} \& \ref{thm: mechanism 2 property}, all three mechanisms are budget balanced. These patterns are depicted in Figure \ref{fig: reward plot}. 

In conclusion, GCRM allows agents to benefit from specific Sybil attacks, which in turn reduces the efficiency of obtaining the solution. On the other hand, the questioner rewards agents less when they manipulate multiple fake identities. Additionally, GCRM allows agents to benefit from a collusion group of a certain size, which also reduces the efficiency of finding a solution. In practice, however, this size may be too large. In other words, the total reward depends on the efficiency of obtaining the solution; the greater the efficiency, the higher the reward.

Due to space constraints, more details of GCRM are graphically shown in the full version. 

\section{Conclusion}
\label{sec: conclusion}

This paper studies a reward mechanism for a single solution task (or answer query) in a social network. We propose two classes of reward mechanisms to achieve desirable properties, such as PO, IC, BB, SP, and CP. Nevertheless, our impossibility results suggest that a reward mechanism cannot achieve all properties simultaneously. In particular, different questioners may consider different properties to be necessary.

Tree Dependent Geometric Mechanism (TDGM) is a classic geometric mechanism that can achieve Sybil-proof and collusion-proof, respectively. The second mechanism, the Generalized Contribution Reward Mechanism (GCRM), is a more flexible mechanism that sacrifices SP to achieve the approximate versions of SP and CP simultaneously. Specifically, GCRM is $\lambda$-SP and $\lambda$-CP. In other words, the optimal number of profitable Sybil attacks is limited and known to the questioner. The questioner adjusts the parameter $\alpha$ to trade-off between Sybil-proof and collusion-proof in order to improve the efficiency and the cost of obtaining the solution. 

Despite the fact that our research provides some interesting insights into reward mechanisms in a single task allocation, numerous aspects remain to be explored. It is an interesting problem to consider referral costs, such that different agents have different costs to invite others. However, as mentioned in \cite{rochet1998ironing}, mechanism design problems with multi-dimensional type distributions are challenging.

\bibliography{aaai23}

\begin{thebibliography}{21}
\providecommand{\natexlab}[1]{#1}

\bibitem[{Babaioff et~al.(2012)Babaioff, Dobzinski, Oren, and
  Zohar}]{babaioff2012bitcoin}
Babaioff, M.; Dobzinski, S.; Oren, S.; and Zohar, A. 2012.
\newblock On bitcoin and red balloons.
\newblock In \emph{Proceedings of the 13th ACM conference on electronic
  commerce}, 56--73.

\bibitem[{Br{\"u}njes et~al.(2020)Br{\"u}njes, Kiayias, Koutsoupias, and
  Stouka}]{brunjes2020reward}
Br{\"u}njes, L.; Kiayias, A.; Koutsoupias, E.; and Stouka, A.-P. 2020.
\newblock Reward sharing schemes for stake pools.
\newblock In \emph{2020 IEEE European Symposium on Security and Privacy
  (EuroS\&P)}, 256--275. IEEE.

\bibitem[{Cebrian et~al.(2012)Cebrian, Coviello, Vattani, and
  Voulgaris}]{cebrian2012finding}
Cebrian, M.; Coviello, L.; Vattani, A.; and Voulgaris, P. 2012.
\newblock Finding red balloons with split contracts: robustness to individuals'
  selfishness.
\newblock In \emph{Proceedings of the forty-fourth annual ACM symposium on
  Theory of computing}, 775--788.

\bibitem[{Che and Kim(2006)}]{che2006robustly}
Che, Y.-K.; and Kim, J. 2006.
\newblock Robustly collusion-proof implementation.
\newblock \emph{Econometrica}, 74(4): 1063--1107.

\bibitem[{Chen et~al.(2013)Chen, Wang, Yu, and Zhang}]{chen2013sybil}
Chen, W.; Wang, Y.; Yu, D.; and Zhang, L. 2013.
\newblock Sybil-proof mechanisms in query incentive networks.
\newblock In \emph{Proceedings of the fourteenth ACM conference on Electronic
  commerce}, 197--214.

\bibitem[{Douceur(2002)}]{douceur2002sybil}
Douceur, J.~R. 2002.
\newblock The sybil attack.
\newblock In \emph{International workshop on peer-to-peer systems}, 251--260.
  Springer.

\bibitem[{Douceur and Moscibroda(2007)}]{douceur2007lottery}
Douceur, J.~R.; and Moscibroda, T. 2007.
\newblock Lottery trees: motivational deployment of networked systems.
\newblock In \emph{Proceedings of the 2007 conference on Applications,
  technologies, architectures, and protocols for computer communications},
  121--132.

\bibitem[{Drucker and Fleischer(2012)}]{drucker2012simpler}
Drucker, F.~A.; and Fleischer, L.~K. 2012.
\newblock Simpler sybil-proof mechanisms for multi-level marketing.
\newblock In \emph{Proceedings of the 13th ACM conference on Electronic
  commerce}, 441--458.

\bibitem[{Emek et~al.(2011)Emek, Karidi, Tennenholtz, and
  Zohar}]{emek2011mechanisms}
Emek, Y.; Karidi, R.; Tennenholtz, M.; and Zohar, A. 2011.
\newblock Mechanisms for multi-level marketing.
\newblock In \emph{Proceedings of the 12th ACM conference on Electronic
  commerce}, 209--218.

\bibitem[{Golle et~al.(2001)Golle, Leyton-Brown, Mironov, and
  Lillibridge}]{golle2001incentives}
Golle, P.; Leyton-Brown, K.; Mironov, I.; and Lillibridge, M. 2001.
\newblock Incentives for sharing in peer-to-peer networks.
\newblock In \emph{International workshop on electronic commerce}, 75--87.
  Springer.

\bibitem[{Kleinberg and Raghavan(2005)}]{kleinberg2005query}
Kleinberg, J.; and Raghavan, P. 2005.
\newblock Query incentive networks.
\newblock In \emph{46th Annual IEEE Symposium on Foundations of Computer
  Science (FOCS'05)}, 132--141. IEEE.

\bibitem[{Laffont and Martimort(1997)}]{laffont1997collusion}
Laffont, J.-J.; and Martimort, D. 1997.
\newblock Collusion under asymmetric information.
\newblock \emph{Econometrica: Journal of the Econometric Society}, 875--911.

\bibitem[{Lv and Moscibroda(2015)}]{lv2015incentive}
Lv, Y.; and Moscibroda, T. 2015.
\newblock Incentive networks.
\newblock In \emph{Twenty-Ninth AAAI Conference on Artificial Intelligence}.

\bibitem[{Lv and Moscibroda(2016)}]{lv2016fair}
Lv, Y.; and Moscibroda, T. 2016.
\newblock Fair and resilient incentive tree mechanisms.
\newblock \emph{Distributed Computing}, 29(1): 1--16.

\bibitem[{Marshall and Marx(2007)}]{marshall2007bidder}
Marshall, R.~C.; and Marx, L.~M. 2007.
\newblock Bidder collusion.
\newblock \emph{Journal of Economic Theory}, 133(1): 374--402.

\bibitem[{Nath et~al.(2012)Nath, Dayama, Garg, Narahari, and
  Zou}]{nath2012mechanism}
Nath, S.; Dayama, P.; Garg, D.; Narahari, Y.; and Zou, J. 2012.
\newblock Mechanism design for time critical and cost critical task execution
  via crowdsourcing.
\newblock In \emph{International Workshop on Internet and Network Economics},
  212--226. Springer.

\bibitem[{Pickard et~al.(2011)Pickard, Pan, Rahwan, Cebrian, Crane, Madan, and
  Pentland}]{pickard2011time}
Pickard, G.; Pan, W.; Rahwan, I.; Cebrian, M.; Crane, R.; Madan, A.; and
  Pentland, A. 2011.
\newblock Time-critical social mobilization.
\newblock \emph{Science}, 334(6055): 509--512.

\bibitem[{Rahwan et~al.(2014)Rahwan, Naroditskiy, Michalak, Wooldridge, and
  Jennings}]{rahwan2014towards}
Rahwan, T.; Naroditskiy, V.; Michalak, T.; Wooldridge, M.; and Jennings, N.~R.
  2014.
\newblock Towards a Fair Allocation of Rewards in Multi-Level Marketing.
\newblock \emph{arXiv preprint arXiv:1404.0542}.

\bibitem[{Rochet and Chon{\'e}(1998)}]{rochet1998ironing}
Rochet, J.-C.; and Chon{\'e}, P. 1998.
\newblock Ironing, sweeping, and multidimensional screening.
\newblock \emph{Econometrica}, 783--826.

\bibitem[{Shapley(1952)}]{P-295}
Shapley, L.~S. 1952.
\newblock \emph{A Value for N-Person Games}.
\newblock Santa Monica, CA: RAND Corporation.

\bibitem[{Zhang, Zhang, and Zhao(2021)}]{zhang2021sybil}
Zhang, Y.; Zhang, X.; and Zhao, D. 2021.
\newblock Sybil-proof answer querying mechanism.
\newblock In \emph{Proceedings of the Twenty-Ninth International Conference on
  International Joint Conferences on Artificial Intelligence}, 422--428.

\end{thebibliography}

\appendix

\clearpage

\section{Supplementary Material}
Figure \ref{fig: GCRM Sybil} empirically evaluates how $\lambda^{*}$ behaves and individual reward improves after Sybil attacks with different $\alpha$ under GCRM.

\begin{figure}[htp!]
    \centering
    \includegraphics[width=0.45\textwidth]{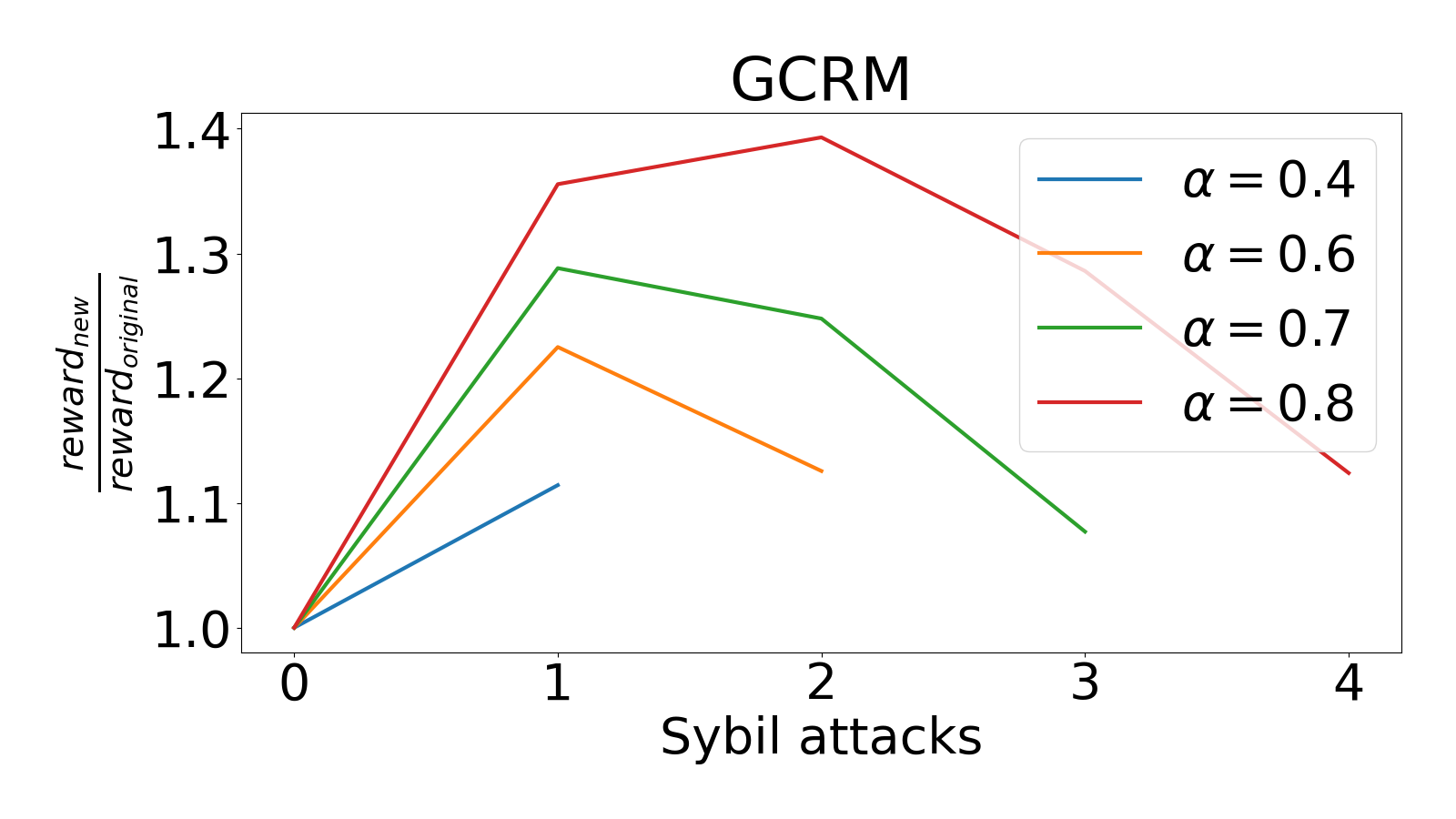}
    \caption{The ratio of new reward to original one after several Sybil attacks. The line end at $\lambda^{*}$. Note that it is unprofitable for $(\lambda^{*}+1)$ Sybil attacks.}
    \label{fig: GCRM Sybil}
\end{figure}

Figure \ref{fig: GCRM Sybil} supplements the result of Lemma \ref{lemma: mechanism 2 sp}. There exists a $\lambda^{\prime}$ such that agents maximize their reward by $\lceil \lambda^{\prime} \rfloor$ Sybil attacks, and both $\lambda^{*}$ and $\lceil \lambda^{\prime} \rfloor$ increase with $\alpha$.

Figure \ref{fig: GCRM collusion} empirically evaluates how individual reward improves with different collusion sizes and $\alpha$ under GCRM.

\begin{figure}[htp]
    \centering
    \includegraphics[width=0.45\textwidth]{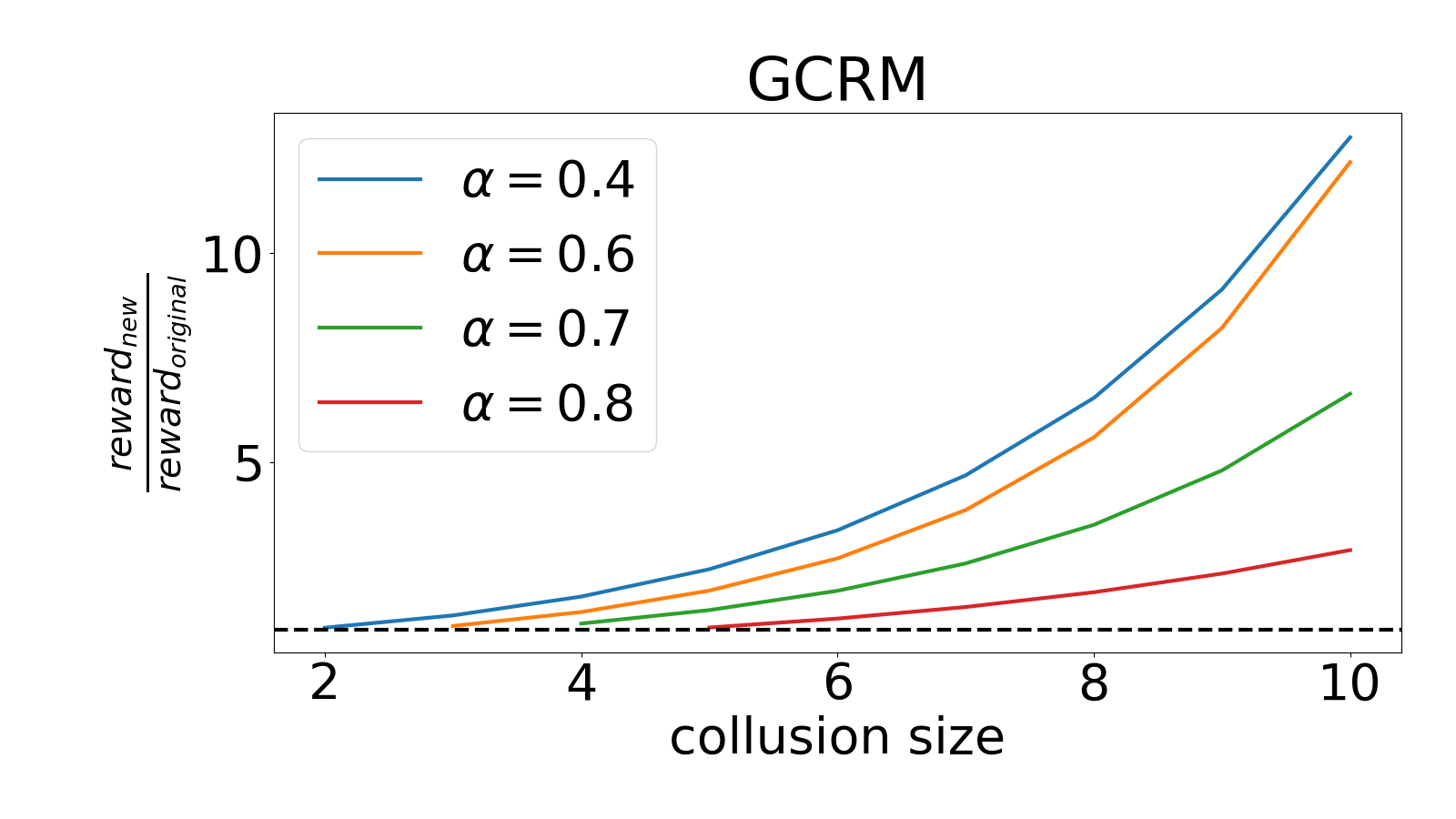}
    \caption{The ratio of new reward to the original one with different collusion size. Black dashed line represents $reward_{new}=reward_{original}$.}
    \label{fig: GCRM collusion}
\end{figure}

Figure \ref{fig: GCRM collusion} supplements the result of Lemma \ref{lemma: mechanism 2 sp}. Moreover, as $\alpha$ increases, forming a coalition group is less profitable.

\section{Proof of Corollary \ref{lemma: family}}
\begin{proof}
    (\textbf{DGM}) The reward function of DGM is defined as $x^{DGM}(i, n)=\alpha_{DGM}^{n-i}(1-\alpha_{DGM})^{i-1}\Pi$. Note the $\alpha_{DGM}$ is different from $\alpha$ in TDGM. Under DGM, it is $\alpha$-split if
    \begin{equation*}
        \begin{aligned}
            \frac{x^{DGM}(i, n)}{x^{DGM}(i+1, n)}=\frac{\alpha_{DGM}}{1-\alpha_{DGM}}=\alpha.
        \end{aligned}
    \end{equation*}
    By substituting $\alpha=\frac{\alpha_{DGM}}{1-\alpha_{DGM}}$ into $x^{DGM}(i, n)$, we derive
    \begin{equation*}
        \begin{aligned}
            x^{DGM}(i, n)&=(\frac{\alpha}{1+\alpha})^{n-i}(1-\frac{\alpha}{1+\alpha})^{i-1}\Pi\\
            &=(\frac{\alpha}{1+\alpha})^{n-i}(\frac{1}{1+\alpha})^{i-1}\Pi\\
            &=\alpha^{n-i}(\frac{1}{1+\alpha})^{n-1}\Pi
        \end{aligned}
    \end{equation*}
    Hence, $\beta(n, \Pi)=(\frac{1}{1+\alpha})^{n-1}\Pi$. By some calculation, we can simply derive $\beta(n, \Pi)-\frac{1-\alpha^{m+1}}{1-\alpha}\beta(n+m, \Pi)\geq 0$, which follows the condition of SP.
    
    (\textbf{$\delta$-GEOM}) The reward function of $\delta$-GEOM is defined as $x^{GEOM}(i, n)=\delta^{n-i}\frac{1-\delta}{1-\delta^{n}}\Pi$. Note that the $\delta$ has the same meaning as $\alpha$ in TDGM. As a result, $\beta(n, \Pi)=\frac{1-\alpha}{1-\alpha^{n}}\Pi$, which is the upper bound of $\beta$. Moreover, $\beta(n, \Pi)-\frac{1-\alpha^{m+1}}{1-\alpha}\beta(n+m, \Pi)\leq 0$, which follows the condition of CP.
\end{proof}

\section{Formal Definition of \textbf{Core} (Lemma \ref{thm: mechanism 1 core} \& \ref{lemma: mechanism 2 core})}
The work is inspired by \cite{rahwan2014towards}.

The characteristic function (coalitional value function) is defined as $W(S)=W(S, resp, c)$ for a coalition $S \subseteq V$ (note: $V$ is the set of all agents and $resp, c$ is the response of each agent). In other words, $W(S)$ maps a coalition group to a payoff value.

An outcome of a game is a pair $(S, \textbf{r})$, where $\textbf{r}=(r_{1}, r_{2},..,r_{n})$ is a payoff vector, $r_{i}$ is the payoff of agent $i\in S$. Moreover, $\sum_{i\in S}r_{i}=W(S)$.

For a given payoff profile $\textbf{r}$, if $\sum_{i\in S}r_{i}<W(S)$, then $\textbf{r}$ is \textbf{blocked} by $S$.

The outcome is \textbf{stable} if no group of agents can receive a payoff greater than what was allocated to them in that outcome, e.g., $\sum_{i\in S}r_{i}\geq W(S)$ $\forall S \subseteq V$. The set of all stable outcomes in a game
is called the \textbf{core} of that game.

\section{Proof of Theorem \ref{thm: mechanism 2 property}}
\begin{proof}
    (\textbf{IC}) For the agent $n$ who solves the task, if he does not provide the solution, he would be either in the solution path $P$ or not. If he is still in the path $P$, his reward becomes from $\frac{1}{(1+\alpha)^{n}}$ to $\frac{\alpha^{n^{\prime}-n}}{(1+\alpha)^{n^{\prime}}}$, where $n^{\prime}>n$. 
    \begin{equation*}
        \begin{aligned}
            \frac{1}{(1+\alpha)^{n}}-\frac{\alpha^{n^{\prime}-n}}{(1+\alpha)^{n^{\prime}}}&=\frac{1}{(1+\alpha)^{n}}(1-\frac{\alpha^{n^{\prime}-n}}{(1+\alpha)^{n^{\prime}-n}})\\
            &>0
        \end{aligned}
    \end{equation*}
    Furthermore, if he is not in the path $P$, he receives nothing from the questioner. As a result, he is worse off from misreporting. Similarly, for other agents $i\in P\setminus\{r, n \}$.
    
    (\textbf{PO}) It is clear that $x(i, n)=\frac{\alpha^{n-i}}{(1+\alpha)^{i}}>0$.
    
    (\textbf{BB}) Summarizing all the rewards, we have
    \begin{equation*}
        \begin{aligned}
            \sum_{i=1}^{n}x(i, n)&=\sum_{i=1}^{n}\frac{\alpha^{n-i}}{(1+\alpha)^{i}}\Pi\\
            &=\alpha^{n}\Pi\sum_{i=1}^{n}(\frac{1}{\alpha(1+\alpha)})^{i}\\
            &=\alpha^{n}\frac{1}{\alpha(1+\alpha)}\cdot\frac{1-(\frac{1}{\alpha(1+\alpha)})^{n}}{1-\frac{1}{\alpha(1+\alpha)}}\Pi\\
            &=\frac{1-\alpha^{n}(1+\alpha)^{n}}{(1+\alpha)^{n}-\alpha(1+\alpha)^{n+1}}\Pi.
        \end{aligned}
    \end{equation*}
    Now, we need to prove $\frac{1-\alpha^{n}(1+\alpha)^{n}}{(1+\alpha)^{n}-\alpha(1+\alpha)^{n+1}}\leq 1$.
    
    If $0<\alpha\leq \frac{\sqrt{5}-1}{2}$, 
    \begin{equation*}
        \begin{aligned}
            1-\alpha^{n}(1+\alpha)^{n}\leq (1+\alpha)^{n}(1-\alpha(1+\alpha))\\
            1\leq (1+\alpha)^{n}(\alpha^{n}+1-\alpha(1+\alpha)),
        \end{aligned}
    \end{equation*}
    which always holds.
    
    If $\frac{\sqrt{5}-1}{2}<\alpha <1 $,
    \begin{equation*}
        \begin{aligned}
            1-\alpha^{n}(1+\alpha)^{n}\geq (1+\alpha)^{n}(1-\alpha(1+\alpha))\\
            1\geq (1+\alpha)^{n}(\alpha^{n}+1-\alpha(1+\alpha)),
        \end{aligned}
    \end{equation*}
    it is always true.
    Therefore, $\frac{1-\alpha^{n}(1+\alpha)^{n}}{(1+\alpha)^{n}-\alpha(1+\alpha)^{n+1}}\leq 1$ holds for any conditions, and $\sum_{i=1}^{n}x(i, n)< \Pi$.
\end{proof}

\section{Proof of Lemma \ref{lemma: mechanism 2 sp and cp}}
\begin{proof}
        1. We start by proving the mechanism is $\lambda^{*}$-SP, where $\lambda^{*}>1$. According to reward function of GCRM (Eq.\ref{eq: mechanism 2}), we have
    \begin{equation*}
        \begin{aligned}
            \sum_{k=0}^{\lambda}x(i+k, n+\lambda)&=\sum_{k=0}^{\lambda} \frac{\alpha^{n+\lambda -i-k}}{(1+\alpha)^{i+k}}\\
            &= \frac{\alpha^{n+\lambda-i}}{(1+\alpha)^{i}}\sum_{k=0}^{\lambda}\frac{1}{(\alpha(1+\alpha))^{k}}\\
            &= x(i, n)\alpha^{\lambda}\sum_{k=0}^{\lambda}\frac{1}{(\alpha(1+\alpha))^{k}}\\
            &= x(i, n)\frac{1-\alpha^{\lambda+1}(1+\alpha)^{\lambda+1}}{(1+\alpha)^{\lambda}-\alpha(1+\alpha)^{\lambda+1}}
        \end{aligned}
    \end{equation*}
    
    For convenience, hereafter, we denote $f(\alpha, \lambda)=\frac{1-\alpha^{\lambda+1}(1+\alpha)^{\lambda+1}}{(1+\alpha)^{\lambda}-\alpha(1+\alpha)^{\lambda+1}}$.
    
    By substituting $\lambda=1$ into $f(\alpha, \lambda)$, we derive that 
    \begin{equation*}
        \begin{aligned}
            f(\alpha,1)&=\frac{1-\alpha^{2}(1+\alpha)^{2}}{(1+\alpha)^{1}-\alpha(1+\alpha)^{2}}\\
            &= \frac{(1-\alpha(1+\alpha))(1+\alpha(1+\alpha))}{(1+\alpha)(1-\alpha(1+\alpha))}\\
            &= \frac{1}{1+\alpha}+\alpha
        \end{aligned}
    \end{equation*}
    Since $\alpha \in (0, 1)$, $\frac{1}{1+\alpha}+\alpha\in (1, \frac{3}{2})$. As a result, $x(i, n) < x(i, n+1)+x(i+1, n+1)$ is always true for the GCRM, which is never 1-SP mechanism.
    
    2. $\lambda^{*}$ can be derived from $f(\alpha, \lambda)=1$.
    \begin{equation*}
        \begin{aligned}
            f(\alpha, \lambda^{*})=1\\
            \frac{1-\alpha^{\lambda^{*}+1}(1+\alpha)^{\lambda^{*}+1}}{(1+\alpha)^{\lambda^{*}}-\alpha(1+\alpha)^{\lambda^{*}+1}}=1\\
            \frac{1-\alpha^{\lambda^{*}+1}(1+\alpha)^{\lambda^{*}+1}}{(1+\alpha)^{\lambda^{*}}}=1-\alpha(1+\alpha).
        \end{aligned}
    \end{equation*}
     Therefore, as $\alpha$ increases, $\lambda^{*}$ also increases.
    
    3. Next, we prove that $\lambda^{\prime}$ is the maximum point of $f(\alpha, \lambda)$ for all $\alpha$.
    Differentiate $f(\alpha, \lambda)$ w.r.t. $\lambda$, we derive that
    \begin{equation*}
        \begin{aligned}
            \frac{\partial f(\alpha, \lambda)}{\partial \lambda}=\frac{(\alpha + 1)^{- n} (\alpha^{n + 1} (\alpha + 1)^{n} \log{(\alpha )} }{\alpha^{2} + \alpha - 1}\\
            +\frac{\alpha^{n + 2} (\alpha + 1)^{n} \log{(\alpha )} + \log{(\alpha + 1 )})}{\alpha^{2} + \alpha - 1}=0\\
            \implies \lambda^{\prime}=\frac{\log(\frac{-\log(1+\alpha)}{\alpha(1+\alpha)(\log(\alpha))})}{\log(\alpha)\log(1+\alpha)}
        \end{aligned}
    \end{equation*}
    
    Taking the second derivative of $f(\alpha, \lambda)$ w.r.t. $\lambda$, we derive that
    
    \begin{equation*}
        \begin{aligned}
            \frac{\partial^{2}f(\alpha, \lambda)}{\partial \lambda^{2}}=\frac{(\alpha + 1)^{- n} (\alpha^{n + 1} (\alpha + 1)^{n} \log{(\alpha )}^{2}}{\alpha^{2} + \alpha - 1}\\
            +\frac{ \alpha^{n + 2} (\alpha + 1)^{n} \log{(\alpha )}^{2} - \log{(\alpha + 1 )}^{2})}{\alpha^{2} + \alpha - 1}
        \end{aligned}
    \end{equation*}
    
    By plugging $\lambda^{\prime}$ into $\frac{\partial^{2}f(\alpha, \lambda)}{\partial \lambda^{2}}$, for any $\alpha = (0, \frac{\sqrt{5}-1}{2}) \cup (\frac{\sqrt{5}-1}{2}, 1)$, we have $\frac{\partial^{2}f(\alpha, \lambda)}{\partial \lambda^{2}}<0$.
    
    Therefore, agents maximize with $\lceil \lambda^{\prime} \rfloor$ Sybil attacks.

    4. Finally, we prove that the reward after $\lambda^{*}$ Sybil attacks can be at most twice as the original reward. Mathematically, it is equivalent to 
    \begin{equation*}
        \begin{aligned}
            \sum_{k=0}^{\lambda^{\prime}}x(i+k, n+\lambda^{\prime}) < 2x(i, n)\\
            \implies f(\alpha, \lambda^{\prime}) \to 2
        \end{aligned}
    \end{equation*}
    Since $f(\alpha, \lambda^{\prime})$ increases with $\alpha$ for $\lambda \in (0, \lambda^{\prime}]$, as $\alpha \to 1$, $\lambda^{\prime} \to \infty$ (L'Hôpital's rule), and we have
    \begin{equation*}
        \begin{aligned}
            f(\alpha, \lambda^{\prime}) &\to \frac{1-2^{\lambda^{\prime}+1}}{2^{\lambda^{\prime}}-2^{\lambda^{\prime}+1}}\\
            & = \frac{(1-2)\sum_{j=0}^{\lambda^{\prime}}2^{j}}{(1-2)2^{\lambda^{\prime}}}\\
            & = \sum_{j=0}^{\lambda^{\prime}} \frac{1}{2^{j}}\\
            &\to 2
        \end{aligned}
    \end{equation*}
\end{proof}

\section{Proof of Lemma \ref{lemma: mechanism 2 maximized agents}}
\begin{proof}
    As a part of the proof of Theorem 5, the total reward of GCRM is
    \begin{equation*}
        \sum_{i=1}^{n}x(i, n)=\frac{1-\alpha^{n}(1+\alpha)^{n}}{(1+\alpha)^{n}-\alpha(1+\alpha)^{n+1}}\Pi.
    \end{equation*}
    By differentiating it w.r.t $n$, we have
    \begin{equation*}
        \begin{aligned}
            \frac{\partial \sum_{i=1}^{n}x(i, n)}{\partial n}=0\\
            \implies n^{\prime}=\frac{\log(\frac{-\log(1+a)}{\log(a)})}{\log(a)+\log(1+a)}
        \end{aligned}
    \end{equation*}
    By substituting $n^{\prime}$ into $\frac{\partial^{2}\sum_{i=1}^{n}x(i, n)}{\partial n^{2}}$, for any $0<\alpha<1$, we have $\frac{\partial^{2}\sum_{i=1}^{n}x(i, n)}{\partial n^{2}}<0$.
    
    Therefore, $n^{\prime}$ is the maximum point of the total reward function of GCRM.
\end{proof}

\end{document}